\newtheorem{thm}{Theorem}
\newtheorem{cor}{Corollary}
\newtheorem{lem}{Lemma}
\newtheorem{prop}{Proposition}
\theoremstyle{remark}
\newtheorem{defn}{Definition}
\newtheorem{rem}{Remark}
\definecolor{Gray}{gray}{0.85}
\definecolor{mycyan}{cmyk}{.3,0,0,0}
\newcolumntype{a}{>{\columncolor{Gray}}c}%\centering\backslash
\newcolumntype{b}{>{\columncolor{white}}c}
\newcolumntype{d}{>{\columncolor{mycyan}}c}
\begin{document}
%
% paper title
% can use linebreaks \\ within to get better formatting as desired
% Do not put math or special symbols in the title.
\title{\textcolor{black}{ Violation Probabilities of AoI and PAoI and Optimal Arrival Rate Allocation for the IoT-based Multi-Source Status Update System}}
% Authors, for the paper (add full first names)
\author{Tianci~Zhang, Shutong~Chen, Zhengchuan~Chen,~\IEEEmembership{Member,~IEEE,} Zhong~Tian,~\IEEEmembership{Member,~IEEE,} Yunjian~Jia,~\IEEEmembership{Member,~IEEE,} Min~Wang,~\IEEEmembership{Member,~IEEE,} and Dapeng~Oliver~Wu,~\IEEEmembership{Fellow,~IEEE}
\thanks{ T. Zhang, S. Chen, Z. Chen, Z. Tian and Y. Jia are with the School of Microelectronics and Communication Engineering, Chongqing University, Chongqing, China (E-mails: \{ztc, cst, czc, ztian, yunjian \}@cqu.edu.cn).
M. Wang is with the School of Optoelectronics Engineering, Chongqing University of Posts and Telecommunications, Chongqing, China. (E-mail:~wangm@cqupt.edu.cn).
D. O. Wu is with the Department of Computer Science, City University of HongKong, HongKong. (E-mail:~dpwu@ieee.org).}
%\author{Zhengchuan~Chen,~Guido~C.~Ferrante,~Howard~H.~Yang, and~Tony~Q.~S.~Quek
%This work was supported in part by the SUTD-ZJU Research Collaboration under Grant SUTD-ZJU/RES/01/2014 and the MOE ARF Tier 2 under Grant MOE2015-T2-2-104.}
}
% The paper headers
%\markboth{Journal of \LaTeX\ Class Files,~Vol.~11, No.~4, December~2012}%
%{Shell \MakeLowercase{\textit{et al.}}: Bare Demo of IEEEtran.cls for Journals}
% The only time the second header will appear is for the odd numbered pages
% after the title page when using the twoside option.
%
% *** Note that you probably will NOT want to include the author's ***
% *** name in the headers of peer review papers.                   ***
% You can use \ifCLASSOPTIONpeerreview for conditional compilation here if
% you desire.
% make the title area
%\maketitle
\maketitle

% As a general rule, do not put math, special symbols or citations
% in the abstract or keywords.
\begin{abstract}
  Lots of real-time applications over Internet of things (IoT)-based status update systems have imperative demands on information freshness, which is usually evaluated by age of information (AoI).
  Compared to the average AoI and peak AoI (PAoI), violation probabilities and distributions of AoI and PAoI characterize the timeliness in more details.
  This paper studies the timeliness of the IoT-based multi-source status update system. \textcolor{black}{By modeling the system as a multi-source M/G/1/1 bufferless preemptive queue, general formulas of violation probabilities and probability density functions (p.d.f.s) of AoI and PAoI are derived with a time-domain approach.}
  For the case with negative-exponentially distributed service time, the violation probabilities and p.d.f.s are obtained in closed form. Moreover, the maximal violation probabilities of AoI and PAoI are proposed to characterize the overall timeliness. \textcolor{black}{To improve the overall timeliness under the resource constraint of  IoT-device, the arrival rate allocation scheme is used to minimize the  maximal violation probabilities.}
  It is proved that the optimal arrival rates can be found by convex optimization algorithms. In addition, it is obtained that the minimum of maximal violation probability of AoI (or PAoI) is achieved only if all violation probabilities of AoI (or PAoI) are equal.
  Finally, numerical results verify the theoretical analysis and show the effectiveness of the arrival rate allocation scheme.
  %that the optimal arrival rates improve the overall timeliness significantly.
\end{abstract}

\begin{IEEEkeywords}
Age of information (AoI), violation probability of AoI, violation probability of peak AoI, Internet of things.
\end{IEEEkeywords}
\section{Introduction}
With the development of various sensing-communication technologies and cyber-physical control applications, Internet of things (IoT) has penetrated deeply into life.
Via various IoT devices and technologies,
%such as information sensors and radio frequency identification technology,
the IoT-based systems can timely sample physical processes and achieve intelligent monitoring and interacting.
%Over the past few years,
Lots of real-time applications over the IoT-based status update systems have emerged, such as the intelligent transport system, intelligent agriculture system and autonomous driving system \cite{8703738,9296855,9625017}.
%Meanwhile, as one of the essential performance indicators of the real-time IoT-based status update systems, information freshness of the updates also receives widespread attention \cite{7397856}.
Meanwhile, as an essential performance indicator of the real-time IoT-based status update systems, information freshness of the updates also receives widespread attention \cite{7397856}.
\textcolor{black}{The freshness-aware IoT network design, where IoT devices sense multiple physical processes and frequently update the status of these processes for a destination note, was synthetically studied in the insightful work \cite{8930830}.}
%\textcolor{black}{The authors of \cite{8930830} synthetically studied freshness-aware IoT network design where IoT devices sense multiple physical processes and frequently update the status of these processes for a destination note.}
%The timely transmission of the sensing and remote control information provides the basis for improving the overall performance for the IoT-based systems.
The timely information transmission provides bases to improve the overall performance of the IoT-based systems.
%In the ADS, the timely transmission of the vehicle speed information and the instructions from the remote control center can effectively reduce the traffic accidents and ensure the security \cite{9625017}.

Age of information (AoI), as a receiver-centric indicator measuring the information freshness, characterizes the time elapsed since the latest successfully received update has been generated \cite{6195689}.  Compared to the traditional metrics, e.g., delay, AoI is more effective in characterizing information freshness \cite{8940930}.
\textcolor{black}{The differences between AoI and traditional metrics were also emphasised in \cite{8930830}.}
To analyze the AoI, status update systems are usually modeled as various queues.
There have been many works focusing on the AoI in the single- or multi-source queueing systems.
 \textcolor{black}{Kendall notation system is used to denote the queueing systems in our work.}
The authors of \cite{6195689} obtained the average AoIs for the single-source M/M/1, M/D/1 and D/M/1 first come first served (FCFS) queueing systems.
%\footnote{We follow the Kendall system in this work.}
An exact expression for the average AoI in the multi-source M/M/1 FCFS queueing system was shown in \cite{9217386}.  The authors also derived three approximate expressions for the average AoI in the multi-source M/G/1 FCFS queueing system \cite{9099557}.
%Additionally, the work \cite{8006592} derived the average AoIs through another perspective, i.e., the distributions of AoI. Additionally, the authors of \cite{8006592} further deduced a general formula of distributions of AoI in general single-source and single-server queueing systems, i.e., single-source G/G/1 queueing systems including last come first served (LCFS) queueing systems, which can also be used to derive the average AoI.(������������AoI �ķֲ�)

To improve the system timeliness, many packet management strategies have been proposed, such as the global- and self-preemption strategies \cite{9047958} in both single- and multi-source cases. {The global-preemption strategy is also known as preemption strategy.} The average AoI in the single-source M/M/1/1 preemptive system was derived in \cite{6310931}, revealing that the preemption achieves a lower average AoI than non-preemption. Moreover, the average AoI and peak AoI (PAoI) in the single-source M/G/1/1 system with preemption or non-preemption strategies were derived in \cite{8006504}.
\cite{8469047} presented the average AoI in the multi-source M/M/1/1 preemptive system. Additionally, they introduced stochastic hybrid systems (SHS) technique as a powerful tool to analyse the AoI.
\textcolor{black}{For the multi-source M/G/1/1 preemptive system, the average AoI and PAoI were derived  by the detour flow graph method \cite{8406928}.}

Apart from the packet management strategies, researchers have proposed various resources scheduling schemes \cite{8406928,9674782,8894836,9445676,23081055}, to further improve the overall system timeliness under the resource constraints. In particular, the arrival rate allocation scheme has attracted lots of attention.
For instance, it was found that setting the equal arrival rates for the sources can achieve the minimal total average AoI and average PAoI, under the equal priority condition \cite{8406928}. In \cite{23081055}, the authors considered the different timeliness requirements for different sources in the multi-source global- and self-preemptive systems. They obtained the optimal arrival rate allocations which minimize the weighted average AoIs. It was also shown that the global-preemption strategy outperforms the self-preemption strategy.

To characterize the timeliness, some works considered the distributions and violation probabilities of AoI and PAoI.
\cite{8820073} presented a general formula of distributions of AoI in single-source G/G/1 systems with FCFS or last come first served queueing disciplines.
\textcolor{black}{The moment generating functions (MGFs) of probability density functions (p.d.f.s) of AoI in single-source non-preemptive or preemptive in service/waiting systems with energy harvesting transmitter and negative-exponentially distributed service time were derived by using SHS \cite{9705518}.
Meanwhile, the MGFs in corresponding multi-source systems with energy harvesting transmitter were also derived by using SHS \cite{9732416}.
With the help of SHS, \cite{9312180} obtained the MGFs for dual-source self-preemptive or non-preemptive systems with negative-exponentially distributed service time.}
It is intuitive that the distributions describe the AoI process more comprehensively.
For instance, based on the distributions of AoI and PAoI, one can derive the average AoI and PAoI as well as variances of AoI and PAoI, which measure the stability of AoI process.
Moreover, \cite{9210029} studied the distribution of AoI in wireless networked control systems with two hops and minimized the violation probability of AoI, with respect to (w.r.t.) the arrival rates of updates.
%Moreover, the authors of \cite{9210029} studied the distribution of AoI in wireless networked control systems with two hops and minimized the tail of AoI distribution, i.e. the violation probability of AoI, with respect to (w.r.t.) the arrival rates of updates.
For the single-source D/G/1 FCFS queuing system, the violation probability of PAoI was derived in \cite{8691802}.
The distributions and violation probabilities of AoI and PAoI were derived for the single-source M/M/1 and M/D/1 FCFS queueing systems in \cite{9324753}.
%The authors gave a general formula for the stationary distribution of AoI in general single-source single service queueing systems in \cite{8820073}, which can also be used to derive the violation probabilities.
 \cite{8937801} achieved better ultra-reliable low-latency communications in vehicular networks, based on the violation probability of AoI.
Since the violation probability of AoI (or PAoI) characterizes the probability that AoI (or PAoI) exceeds a given AoI (or PAoI) threshold, the violation probabilities help evaluate the timeliness with strict age requirements.
\subsection{Motivations and Novelty}
As mentioned above, it is found that: a) The preemption strategy holds the potential of improving the timeliness; b) The arrival rate allocation scheme considering different timeliness requirements for sources can enhance the overall timeliness for the multi-source system; c) Compared to the average AoI and PAoI, the corresponding violation probabilities and distributions evaluate the timeliness more precisely and comprehensively, especially when AoI and PAoI fluctuate greatly. Besides, in many applications, when the AoI (or PAoI) exceeds a certain threshold, terrible results would outcome. The occurrence probability of outdated information, i.e. violation probability, must be well controlled.

Inspired by these, this work studies the violation probabilities and distributions of AoI and PAoI for an IoT-based multi-source status update system, where sensors update data to the same monitor through a transmitter allowing preemption. To the best of our knowledge, there is no work utilizing the violation probabilities and distributions of AoI and PAoI to both characterize and optimize the timeliness for multi-source preemptive systems. Compared to the single-source system, analysing the violation probabilities and distributions in the preemptive multi-source system is more sophisticated, since the preemption leads to disturbed queuing process of each stream and more involved inter-arrival time distribution. By modeling the system as a multi-source M/G/1/1 bufferless preemptive system, the violation probabilities and p.d.f.s of AoI and PAoI are derived.
\textcolor{black}{The Laplace transform (LT) of p.d.f. of the AoI was obtained by utilizing Palm calculus in \cite{9598815}.} \textcolor{black}{It is noteworthy that, as an alternative, we utilize a time-domain approach with more obvious physical meaning to analyse the violation probabilities and p.d.f.s of AoI and PAoI.}
%\textcolor{black}{Specifically, we follow the insightful time-domain approach of \cite{9324753}, and extend it to the multi-source case,
%as well as utilize several intermediate results in \cite{8406928}, to derive the violation probabilities.
\textcolor{black}{Specifically, we extend the insightful time-domain approach of \cite{8737474} to the multi-source case, and derive the violation probabilities based on several intermediate results in \cite{8406928}.
This provides a new thinking to study the violation probabilities and p.d.f.s for the multi-source systems.
In particular, for the M/M/1/1 case, the closed form expressions of the violation probabilities and p.d.f.s of AoI and PAoI are obtained. This presents more analytical results for practical system design, compared with the corresponding LTs and MGFs.
\textcolor{black}{Worthy of remark is that the MGF of AoI is usually calculated by using SHS for the system with negative-exponentially distributed service time \cite{9705518,9732416,9312180,9103131}, however not for the case with generally distributed service time.
In contrast, with the help of the time-domain approach, the violation probability and p.d.f. of AoI can be derived for the case with generally distributed service time.}
%It is noteworthy that \cite{8820073} focuses on the violation probabilities for single-source systems, and \cite{8406928} focuses on the average AoI and PAoI for the multi-source M/G/1/1 bufferless preemptive system.
%In our work, for the M/M/1/1 case, the closed form expressions of the violation probabilities and p.d.f.s of AoI and PAoI, not just the Laplace transform of the p.d.f. of AoI or MGF of AoI, are further obtained.}
}
\textcolor{black}{To fully characterize the timeliness of the multi-source system, the maximal violation probability, i.e., the maximum among the violation probabilities of all streams, is proposed as an overall timeliness metric.
To improve the overall timeliness under the resource constraint of IoT device, we minimize the maximal violation probabilities of AoI and PAoI with the arrival rate allocation scheme, respectively.
%For more insights, the M/M/1/1 case is started with, where the negative-exponentially distributed service time with memoryless property is common for studying the timeliness of IoT-based systems.
Specifically, to get more insights, we focus on the arrival rate allocation in the M/M/1/1 case, as the negative-exponentially distributed service time with memoryless property is common for studying the timeliness of IoT-based systems.
%, due to the memoryless property of the exponentially distributed service time.
However, it is noteworthy that, due to the complexity of objective functions w.r.t. arrival rates, it is difficult to solve the optimization problems and find the optimal arrival rate allocations.
With careful analyses on the violation probabilities, we prove that the formulated problems are convex and the optimum can be found by using standard convex optimization algorithms.
%We turn to prove the convexity of the optimization problems by sophisticated mathematical analyses and solve them with convex optimization algorithms.
}

%\textcolor{black}{The optimization problems on the maximal violation probabilities of AoI and PAoI are solved with their convexities respectively, whose proofs are challenging.}
%The corresponding convexities are obtained by our deliberate mathematical analyses.
%Different from the frequently-used metrics such as the weighted average AoI and PAoI which utilize weight factor to reflect the different timeliness requirements \cite{23081055}, \cite{9718306}, the proposed maximal violation probability can characterize the requirements through setting the different age thresholds more naturally and precisely.
\subsection{Contributions and Organization}
The contributions of this work are summarized as follows:
\begin{itemize}
 \item By modeling the considered multi-source status update system as the multi-source M/G/1/1 bufferless preemptive system, we derive the general formulas of the violation probabilities and  p.d.f.s of AoI and PAoI. It is found that the AoI and inter-departure time follow the same distribution.
     Moreover, the system where the service time is negative-exponentially distributed is considered. The violation probabilities and p.d.f.s of AoI and PAoI are derived in closed form. Based on the obtained p.d.f.s, the variances of AoI and PAoI are further presented to characterize the stability of the AoI process.
  \item \textcolor{black}{For the case with negative-exponentially distributed service time, the maximal violation probabilities of AoI and PAoI are respectively minimized to achieve the optimal overall timeliness under the resource limitation of IoT device.} The arrival rate allocation scheme is utilized, and different timeliness requirements for sources are considered. Specifically, we formulate two optimization problems, and prove their convexity. The optimal arrival rates can be found by standard convex optimization algorithms. It is further obtained that the minimum of maximal violation probability of AoI (or PAoI) is achieved only if the violation probabilities of all sources are equal.
  \item According to the numerical results and simulations, our theoretical analyses are verified. It is found that the optimal arrival rate allocations based on the maximal violation probabilities can improve the overall timeliness remarkably.
      Moreover, it is shown that the timeliness of the system with negative-exponentially distributed service time, outperforms that of the system with deterministic or uniformly distributed service time.
\end{itemize}

The rest of this paper is organized as follows. Section \ref{sec 2} introduces the system model. Section \ref{sec 3} derives the violation probabilities of AoI and PAoI for the multi-source M/G/1/1 bufferless preemptive system.
Section \ref{sec 4} studies the violation probabilities for the system with negative-exponentially distributed service time.
In Section \ref{sec 5}, the overall timeliness is optimized based on the violation probabilities of AoI and PAoI, respectively.
Numerical results and simulations are shown in Section \ref{sec num}. Finally, Section \ref{sec con} concludes this paper.
\section{System Model}\label{sec 2}
\textcolor{black}{Consider an IoT-based  multi-source status update system, which consists of an IoT device and a remote monitor, as shown in Fig. \ref{fig:vvv}.}
\begin{figure}[!t]
\centering
\includegraphics[width=0.98\linewidth]{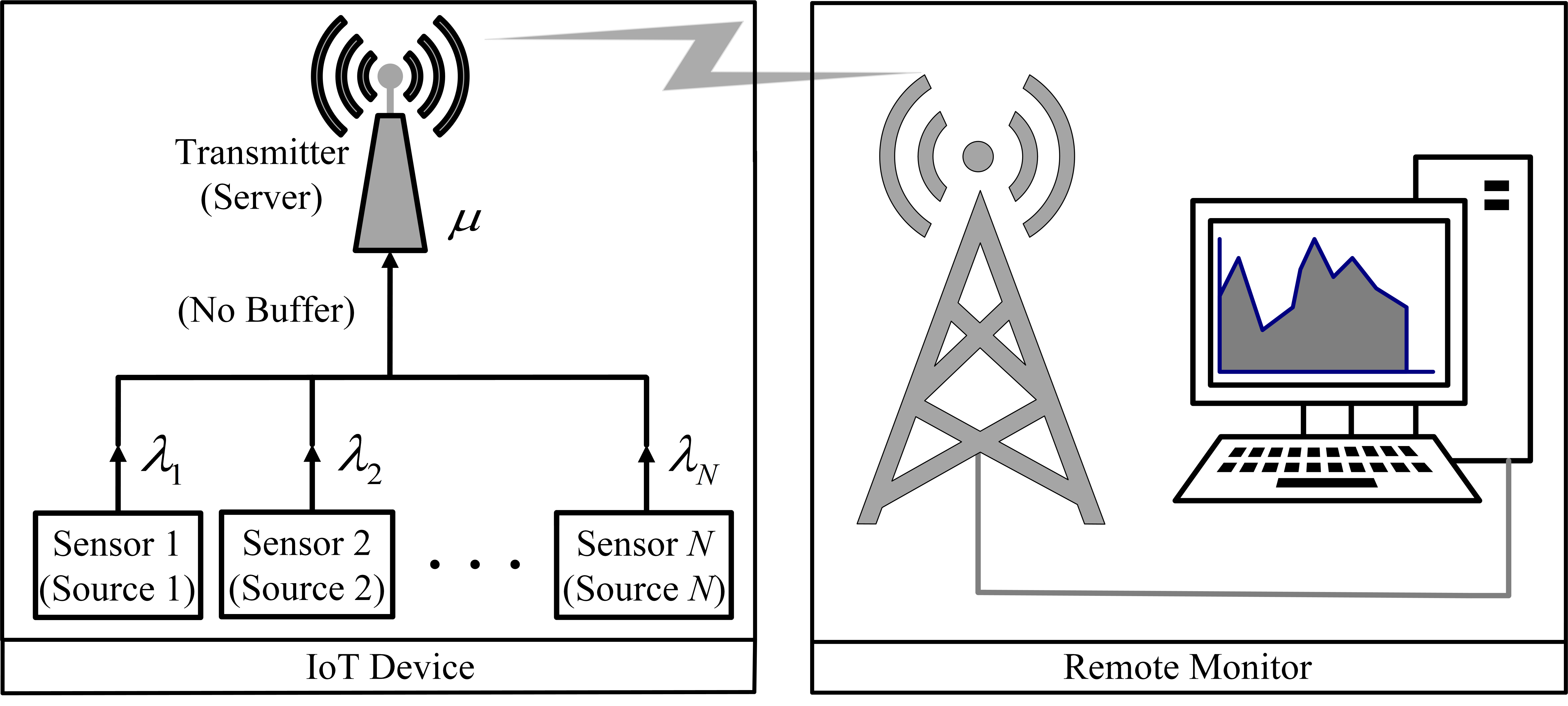}
\caption{The considered multi-source status update system.}
\label{fig:vvv}
\end{figure}
In the IoT device, there are $N$ independent sensors (sources) sensing different information, and a transmitter (server) which sends the collected information to the remote monitor.
The sensors are indexed by $i\in \mathcal{N}:=\left\{{{\rm{1,2,}}\cdots,N}\right\}$.
Specifically, the updates of sensor $i$, $\forall i\in \mathcal{N}$, are generated based on a Poisson process at arrival rate $\lambda_i$ and subsequently served by the transmitter at service rate $\mu$.  Additionally, we denote the total arrival rate of all sensors by $\lambda  = \sum\nolimits_{i = 1}^N {{\lambda _i}}$.
Since the consumable resources of the IoT device, such as the energy for sensing, are limited, it is assumed that the total arrival rate $\lambda$ is fixed.
%Since sensors consume resources such as energy for sensing and the resources of the IoT device are limited, it is assumed that the total arrival rate $\lambda$ is fixed.

To ensure that the received updates are as fresh as possible, the considered system is considered as a bufferless preemptive system. There is no buffer in the IoT device, and the updates of all sensors preempt each other to get served. Specifically, when the transmitter is busy and a new update arrives, the update under service will be discarded and the new one will be served instead immediately. The transmitter always serves the latest update and the monitor only receives the completely served updates. This reduces the system time of successfully received updates and further improve the freshness.
%In effect, we only consider the timeliness of the successfully received updates.
Accordingly, the considered system can be modeled as a multi-source M/G/1/1 bufferless preemptive system.
\begin{figure}[!t]
\centering
\includegraphics[width=1\linewidth]{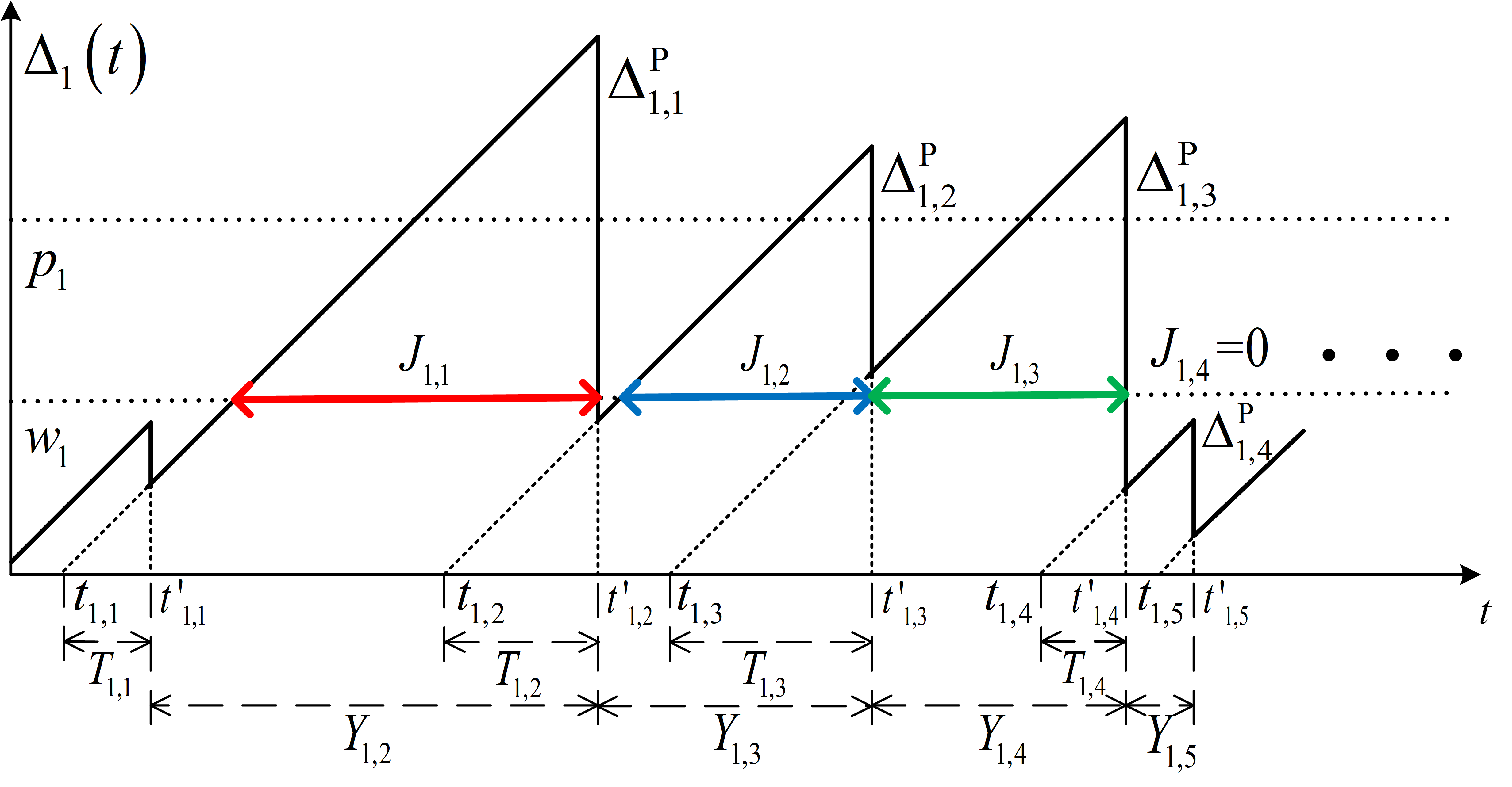}
\caption{Age evolution example for source 1.}
\label{fig:model}
\end{figure}

\textcolor{black}{
For clarity, we denote the arrival moment and departure moment, as well as system time and service time of the $k$-th successfully received update of source $i$, by $t_{i,k}$, $t'_{i,k}$, $T_{i,k}=t'_{i,k}-t_{i,k}$ and $S_{i,k}$, respectively, as shown in Fig. \ref{fig:model}. The inter-departure time between the $k$-th successfully received update from source $i$ and the $(k-1)$-th one is denoted by $Y_{i,k}=t'_{i,k}-t'_{i,k-1}$. Moreover, we denote the inter-arrival time between two consecutive updates from source $i$
%, which may or may not be successfully received,
by $X_i$.
%Examples are shown in Fig. \ref{fig:model}.
Based on the system model, it is known that $X_i$ is negative-exponentially distributed with mean $1/{\lambda_i}$ and $S_{i,k}$ is independent identically distributed with mean $1/\mu$.}

%
%For clarity, we denote: 1) The arrival moment of the $k$-th successfully received update of source $i$ by $t_{i,k}$; 2) The departure moment of that by $t'_{i,k}$, i.e., the moment of receiving the $k$-th successfully received update; 3) The system and service time of that by $T_{i,k}$ and $S_{i,k}$, respectively; 4) The inter-departure time between the $k$-th successfully received update from source $i$ and the $(k-1)$-th one by $Y_{i,k}=t'_{i,k}-t'_{i,k-1}$; 5) And the inter-arrival time between two consecutive updates from source $i$ by $X_i$, which may or may not be successfully received, ($k\in \mathbb{N}^+$).   Examples can be found in Fig. \ref{fig:model}.
%According to the system model, it is known that $X_i$ is exponentially distributed with mean $1/{\lambda_i}$ and $S_{i,k}$ is independent identically distributed with mean $1/\mu$.

Next, let us introduce the violation probabilities of AoI and PAoI to evaluate the timeliness performance.
For source $i$, the instantaneous AoI is $\Delta_{i}(t)=t-U_i(t)$,
where $n_{i}(t)=\max\{k|t'_{i,k}\leq t\}$ and $U_i(t)=t_{n_{i}(t)}$ are the index of the most recently successfully received update and its arrival moment, respectively.
For the $k$-th successfully received update of source $i$, the PAoI is the peak age in its AoI process, i.e., the time between the moment it arrives at the IoT device and the moment it is replaced by the $(k+1)$-th successfully received update of source $i$ at the monitor. The PAoI can be given by
\begin{align}\label{p2}
\Delta_{i,k}^{\text{P}}=Y_{i,k+1}+T_{i,k}.
\end{align}
Accordingly, the definitions of violation probabilities of AoI and PAoI are given in the following.
\begin{defn}
\begin{itshape}
 (Violation Probabilities of AoI and PAoI)
\end{itshape}
In the multi-source status update system, the violation probabilities of AoI and PAoI are defined as the probabilities that the AoI and PAoI of successfully received updates of source $i$ exceed given AoI and PAoI thresholds $w_i$ and $p_i$, respectively, i.e.,
\begin{align}\label{er}
\!P_{i}^{\text{A}}(w_i):=\Pr\{\Delta_{i}>w_i\} \; {\text{ and }} \; P_{i}^{\text{P}}(p_i):=\Pr\{\Delta_{i}^{\text{P}}>p_i\}\!.
\end{align}
\end{defn}
From (\ref{er}), it can be seen that the violation probability of AoI (or PAoI) indicates the occurrence frequency of AoI (or PAoI) greater than a preset AoI (PAoI) threshold.
%The violation probability of PAoI reveals the occurrence frequency of PAoI greater than a preset PAoI threshold.
%In the rest of this paper, we first analyse the violation probabilities of AoI and PAoI. Then, we optimize the overall timeliness of the system based on the violation probabilities.
\section{Violation Probabilities in the M/G/1/1 System}\label{sec 3}
In this section, the violation probabilities and p.d.f.s of PAoI and AoI are derived for the considered multi-source status update system, respectively.

First, in terms of long-term distribution, we omit the subscript and denote $S_{i,k}$, $T_{i,k}$ and $Y_{i,k}$ by $S$, $T$ and $Y_{i}$, respectively.
Let us denote the p.d.f.s of $S$, $T$ and $Y_{i}$ by $f_{S}(x)$, $f_{T}(x)$ and $f_{Y_{i}}(x)$, respectively.
The corresponding unilateral LTs is denoted by $L_{S}(s)$, $L_{T}(s)$ and $L_{Y_{i}}(s)$. We also use $L^{-1}[\cdot]$ to represent the inverse LT.
\subsection{Violation Probability of PAoI in the M/G/1/1 System}
In this subsection, the violation probability and p.d.f. of PAoI in the multi-source M/G/1/1 preemptive system are derived, as shown in the following theorem.
\begin{thm}\label{thm pdfp}
In the multi-source M/G/1/1 bufferless preemptive system, given PAoI threshold $p_i>0$, the violation probability of the PAoI corresponding to source $i$ can be expressed as
\begin{align}\label{vpdfp}
P_{i}^{\text{P}}(p_i)=1-\frac{1}{{{L_S}( \lambda  )}}\int_0^{p_i} {{f_{{Y_i}}}( y ){\text d}y\!\int_0^{p_i - y}{{e^{ - \lambda t}}{f_S}( t ){\text d}t} }.
\end{align}
The corresponding p.d.f. of the PAoI is given by
\begin{align}\label{pdfp}
{f_{\Delta _i^{\text{P}}}}(x)\!=\!
\begin{cases}
\!{\frac{1}{{{L_S}( \lambda  )}}\!\int_0^x\! {{e^{ - \lambda ( {x - y} )}}{f_{{Y_i}}}( y ){f_S}( {x - y} ){\text d}y}},&\!\!{\!{x \!\geq\!  0,}\!}\!\!\\
\!0, & \!{\!{\!{{x \!<\! 0.}}\!}\!}\!
\end{cases}
\end{align}
\end{thm}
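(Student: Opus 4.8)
The plan is to reduce the PAoI of a successfully received update to a sum of two \emph{independent} random variables whose stationary laws are known (or available from \cite{8406928}), and then read off the density and the violation probability by a single convolution and a change of variables. By (\ref{p2}), the PAoI of the $k$-th successfully received update of source $i$ is $\Delta_{i,k}^{\text{P}}=Y_{i,k+1}+T_{i,k}$, so, passing to the long-term distribution, it suffices to determine the law of $Y_i+T$ together with the joint behaviour of its two summands.

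First I would pin down the stationary system time $T$ of a successfully received update. In the bufferless preemptive system a packet is served without interruption from its arrival to its departure, because any arrival (from any source) during its service would preempt it; hence for a successfully received update the system time equals its service time, $T=S$. A packet with service requirement $s$ is successfully received exactly when no arrival occurs in a window of length $s$, an event of probability $e^{-\lambda s}$ since the superposition of the $N$ Poisson streams is Poisson with rate $\lambda$. A size-biasing (Bayes) argument then gives $f_T(t)=e^{-\lambda t}f_S(t)/L_S(\lambda)$, the normalizing constant being $\int_0^{\infty}e^{-\lambda t}f_S(t)\,\mathrm{d}t=L_S(\lambda)$; this is one of the intermediate facts one may borrow from \cite{8406928}. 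The marginal density $f_{Y_i}$ of the inter-departure time is likewise treated as a known input, which is why it only appears symbolically in the statement.

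The crux of the argument is the independence of $Y_{i,k+1}$ and $T_{i,k}$. I would establish it via the regenerative structure at the successful-departure epochs of source $i$: at $t'_{i,k}$ the server becomes empty, and by the memorylessness of the Poisson arrivals the evolution after $t'_{i,k}$ is a probabilistic copy that is independent of the entire history up to $t'_{i,k}$. Since $T_{i,k}=t'_{i,k}-t_{i,k}$ is measurable with respect to that past history, while $Y_{i,k+1}=t'_{i,k+1}-t'_{i,k}$ is determined solely by the arrivals and service times occurring after $t'_{i,k}$, the two are independent under the stationary (Palm) sampling of successful updates, with $Y_{i,k+1}\sim Y_i$ and $T_{i,k}\sim T$. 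I expect this to be the main obstacle: making the independence claim (and the size-biased form of $f_T$) precise under Palm sampling requires some care, whereas everything afterwards is routine.

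It then remains to assemble the pieces. By independence, $\Delta_i^{\text{P}}=Y_i+T$ has density $f_{\Delta_i^{\text{P}}}(x)=\int_0^{x}f_{Y_i}(y)\,f_T(x-y)\,\mathrm{d}y$ for $x\ge 0$ and $0$ otherwise; substituting the expression for $f_T$ gives (\ref{pdfp}). Integrating this density over $[0,p_i]$, exchanging the order of integration over the triangle $\{0\le y\le x\le p_i\}$, and substituting $t=x-y$ in the inner integral yields $\Pr\{\Delta_i^{\text{P}}\le p_i\}=\frac{1}{L_S(\lambda)}\int_0^{p_i}f_{Y_i}(y)\int_0^{p_i-y}e^{-\lambda t}f_S(t)\,\mathrm{d}t\,\mathrm{d}y$, and taking the complement gives (\ref{vpdfp}). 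As a consistency check, letting $p_i\to\infty$ sends $P_i^{\text{P}}(p_i)\to 0$ because $\int_0^{\infty}e^{-\lambda t}f_S(t)\,\mathrm{d}t=L_S(\lambda)$ and $\int_0^{\infty}f_{Y_i}(y)\,\mathrm{d}y=1$, and the same computation shows $f_{\Delta_i^{\text{P}}}$ integrates to $1$.
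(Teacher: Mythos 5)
Your proposal is correct and follows essentially the same route as the paper's proof: decompose $\Delta_i^{\text{P}}=Y_{i,k+1}+T_{i,k}$, invoke the independence of the two summands together with $f_T(t)=e^{-\lambda t}f_S(t)/L_S(\lambda)$ from \cite{8406928}, and convolve (the paper writes the c.d.f.\ as a double integral and then differentiates, whereas you convolve for the density and then integrate, which is equivalent). Your added justifications of the size-biased form of $f_T$ and of the independence via the regenerative/memoryless structure at departure epochs are sound and in fact make explicit steps the paper only asserts or cites.
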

\begin{proof}
See Appendix A.
\end{proof}

For a specific service process, $f_S(x)$ is given and ${{L_S}\left( \lambda  \right)}$ can be determined.
Note that \cite{8406928}
\begin{align}\label{ly}
{L_{{Y_i}}}\left( s \right) = {{{\lambda _i}{L_S}\left( {\lambda  + s} \right)}}/({{{\lambda _i}{L_S}\left( {\lambda  + s} \right) + s}}).
\end{align}
$f_{Y_{i}}(x)$ can be derived via inverse LT. Thus, $P_{i}^{\text{P}}(p_i)$  and ${f_{\Delta _i^{\text{P}}}}(x)$ can be derived in closed form through Theorem \ref{thm pdfp}.
\subsection{ Violation Probability of AoI in the M/G/1/1 System}
\textcolor{black}{Next, the violation probability and p.d.f. of AoI in the multi-source M/G/1/1 bufferless preemptive system are derived with a time-domain approach.}

Let us first introduce the violation time to facilitate the analysis.
\textcolor{black}{Formally, we define the time interval that the AoI of the $k$-th successfully received update of source $i$ is larger than the given AoI threshold $w_i$ during the inter-departure time $Y_{i,k+1}$, as the violation time ${J_{i,k}}$.}
Thus, according to the values of $T_{i,k}$ and $Y_{i,k+1}$, one can obtain
\begin{align}\label{p17}
{J_{i,{k}}} =
\begin{cases}
{Y_{i,k+1}},&{{T_{i,k}} > w_i,}\\
{Y_{i,k+1}}\! +\! {T_{i,k}}\! - \!w_i,&{w_i \!-\! {Y_{i,k+1}} < {T_{i,k}} \leq w_i,}\\
0,&{{T_{i,k }} \leq w_i \!-\! {Y_{i,k+1}}.}
\end{cases}
\end{align}
For instance, in terms of the age evolution shown in Fig. \ref{fig:model}, it has that ${J_{1,1}}={T_{1,1}} + {Y_{1,2}} - {w_i}$, ${J_{1,2}}={T_{1,2}} + {Y_{1,3}} - {w_i}$, ${J_{1,3}}={Y_{1,4}}$ and ${J_{1,4}}=0$.

Then, let us characterize the violation probability of AoI with ${J_{i,k}}$.
Recalling that $n_{i}(t)=\max\{k|t'_{i,k}\leq t\}$, one can get that during time $(0,t)$, the number of successfully received updates from source $i$ is $n_i(t)$.
Therefore,
%based on the work \cite{8737474},
the violation probability of AoI is given by
\begin{align}\label{p21}
P_{i}^{\text{A}}(w_i)=\Pr\{\Delta_{i}>w_i\}
& = \mathop {\lim }\limits_{t \to \infty } \frac{1}{{t}}\sum\nolimits_{k = 1}^{{n_i}\left( {t} \right)} {{J_{i,k}}}  \nonumber\\
& = \mathop {\lim }\limits_{t \to \infty } \frac{{{n_i}\left( {t} \right)}}{{t}}\frac{\sum\nolimits_{k = 1}^{{n_i}\left( t \right)} {{J_{i,k}}}}{{{n_i}\left( {t} \right)}}\nonumber\\
&= {\frac{1}{{\mathop {\lim }\limits_{t \to \infty } \frac{{t}}{{{n_i}\left( {t} \right)}}}}}{\mathop {\lim }\limits_{{t} \to \infty } \frac{{\sum\nolimits_{k = 1}^{{n_i}\left( {t} \right)} {{J_{i,k}}} }}{{{n_i}\left( {t} \right)}}}\nonumber\\
&= \frac{\mathbb{E}\left[ {{J_{i,k}}} \right]}{{\mathbb{E}\left[ {{Y_i}} \right]}},
\end{align}
where $\mathbb{E}[\cdot]$ denotes the expectation operator. By deriving ${\mathbb{E}\left[ {{J_{i,k}}} \right]}$ and ${{\mathbb{E}\left[ {{Y_i}} \right]}}$, the violation probability of AoI can be obtained, as shown in the following theorem.

\begin{thm}\label{cor p2}
In the multi-source M/G/1/1 bufferless preemptive system, given AoI threshold $w_i>0$, the violation probability of the AoI corresponding to source $i$ is given by
\begin{align}
P_i^{\text{A}}(w_i) &= 1 - \int_0^{w_i} {{f_{{\Delta _i}}}( x ){\text d}x},
\end{align}
where ${f_{{\Delta _i}}}( x )$ represents the p.d.f. of the AoI:
\begin{align}\label{fgfgfg}
{f_{{\Delta _i}}}\left( x \right)= L^{-1}\left[\frac{{{\lambda _i}{L_S}\left( {\lambda  + s} \right)}}{{{\lambda _i}{L_S}\left( {\lambda  + s} \right) + s}}\right].
\end{align}
\end{thm}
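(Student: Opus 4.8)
The plan is to compute $\mathbb{E}[J_{i,k}]$ from the case description \eqref{p17}, substitute it into the renewal--reward identity $P_i^{\text{A}}(w_i)=\mathbb{E}[J_{i,k}]/\mathbb{E}[Y_i]$ already obtained in \eqref{p21}, and then identify the resulting density through its Laplace transform. The first observation is that, writing $(x)^{+}:=\max\{x,0\}$ and recalling $\Delta_{i,k}^{\text{P}}=Y_{i,k+1}+T_{i,k}$ from \eqref{p2}, the three regimes in \eqref{p17} collapse into the single pathwise identity
\begin{align}\label{plan-Jik}
J_{i,k}=\bigl(\Delta_{i,k}^{\text{P}}-w_i\bigr)^{+}-\bigl(T_{i,k}-w_i\bigr)^{+},
\end{align}
verified by checking each of $T_{i,k}>w_i$, $w_i-Y_{i,k+1}<T_{i,k}\le w_i$ and $T_{i,k}\le w_i-Y_{i,k+1}$ in turn. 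Taking expectations in the long-term distribution and using $\mathbb{E}[(Z-w_i)^{+}]=\int_{w_i}^{\infty}\Pr\{Z>z\}\,{\text d}z$ for a nonnegative $Z$ gives $\mathbb{E}[J_{i,k}]=\int_{w_i}^{\infty}\bigl(\Pr\{\Delta_i^{\text{P}}>z\}-\Pr\{T>z\}\bigr)\,{\text d}z$, so by \eqref{p21}, $P_i^{\text{A}}(w_i)=\int_{w_i}^{\infty}f_{\Delta_i}(x)\,{\text d}x$ with
\begin{align}\label{plan-fDi}
f_{\Delta_i}(x):=\frac{\Pr\{\Delta_i^{\text{P}}>x\}-\Pr\{T>x\}}{\mathbb{E}[Y_i]}.
\end{align}
From the convolution form \eqref{pdfp} of Theorem~\ref{thm pdfp}, $\Delta_i^{\text{P}}$ has the distribution of $Y_i+T$ with $Y_i,T$ independent; hence $\Delta_i^{\text{P}}$ stochastically dominates $T$, so $f_{\Delta_i}\ge 0$ and $f_{\Delta_i}=-\tfrac{{\text d}}{{\text d}x}P_i^{\text{A}}$ is genuinely the p.d.f. of the stationary AoI $\Delta_i$.

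It then remains to match $f_{\Delta_i}$ with the claimed inverse transform, which I would do on the transform side. Using $L[\Pr\{Z>\cdot\}](s)=(1-L_Z(s))/s$ for nonnegative $Z$, \eqref{plan-fDi} gives $L[f_{\Delta_i}](s)=(L_T(s)-L_{\Delta_i^{\text{P}}}(s))/(s\,\mathbb{E}[Y_i])$. The remaining ingredients are already in hand: $\mathbb{E}[Y_i]=1/(\lambda_i L_S(\lambda))$ by differentiating \eqref{ly} at $s=0$; $L_T(s)=L_S(\lambda+s)/L_S(\lambda)$ because a successfully received update is precisely one whose uninterrupted service completes before the next (preempting) arrival, which happens at total rate $\lambda$ (an intermediate result of \cite{8406928}); and $L_{\Delta_i^{\text{P}}}(s)=L_{Y_i}(s)L_S(\lambda+s)/L_S(\lambda)$, either from the factorisation $\Delta_i^{\text{P}}\sim Y_i+T$ above or directly from \eqref{pdfp} via the shift rule $L[e^{-\lambda x}f_S(x)](s)=L_S(\lambda+s)$. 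Substituting these and writing \eqref{ly} as $1-L_{Y_i}(s)=s/(\lambda_i L_S(\lambda+s)+s)$, the expression telescopes:
\begin{align}
L[f_{\Delta_i}](s)&=\lambda_i L_S(\lambda+s)\cdot\frac{1-L_{Y_i}(s)}{s}\nonumber\\
&=\frac{\lambda_i L_S(\lambda+s)}{\lambda_i L_S(\lambda+s)+s}=L_{Y_i}(s),\label{plan-final}
\end{align}
so $f_{\Delta_i}(x)=L^{-1}\bigl[\lambda_i L_S(\lambda+s)/(\lambda_i L_S(\lambda+s)+s)\bigr]$, which is \eqref{fgfgfg}; in particular $\Delta_i\stackrel{d}{=}Y_i$. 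Evaluating \eqref{plan-final} at $s=0$ gives $\int_0^{\infty}f_{\Delta_i}(x)\,{\text d}x=L_{Y_i}(0)=1$, hence $P_i^{\text{A}}(w_i)=1-\int_0^{w_i}f_{\Delta_i}(x)\,{\text d}x$.

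I expect the main obstacle to be the auxiliary-transform bookkeeping rather than a single deep step: one must argue carefully that a successfully received update has system time equal to its uninterrupted service time (giving $L_T(s)=L_S(\lambda+s)/L_S(\lambda)$), keep the $e^{-\lambda x}$ factor of Theorem~\ref{thm pdfp} aligned with the shift rule when reading off $L_{\Delta_i^{\text{P}}}(s)$, and make sure the limit in \eqref{p21} may legitimately be replaced by $\mathbb{E}[J_{i,k}]/\mathbb{E}[Y_i]$ (a renewal--reward argument) and that the complementary c.d.f. may be differentiated. Once $\mathbb{E}[Y_i]$, $L_T(s)$ and $L_{\Delta_i^{\text{P}}}(s)$ are correctly assembled, the reduction to \eqref{plan-final} is purely algebraic.
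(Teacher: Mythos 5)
Your proposal is correct, and it passes through the same milestones as the paper's Appendix B — start from the violation-time case description (\ref{p17}) and the renewal–reward identity (\ref{p21}), obtain the density, and identify it through its Laplace transform using (\ref{ly}), (\ref{p8}) and $\mathbb{E}[Y_i]=1/(\lambda_i L_S(\lambda))$ — but the middle of your argument is genuinely different and tidier. The paper computes $\mathbb{E}[J_{i,k}]$ by an explicit three-case double integral over the joint density $f_{Y_i}(r)f_T(t)$, integrates by parts to reach (\ref{p23}), differentiates the violation probability in $w_i$ to get (\ref{p28}), and only then rewrites the result as a convolution with unit-step functions so the convolution theorem yields (\ref{pppppp}). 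You instead collapse (\ref{p17}) into the single pathwise identity $J_{i,k}=(\Delta_{i,k}^{\text{P}}-w_i)^{+}-(T_{i,k}-w_i)^{+}$ (which checks out in all three regimes), pass to expectations via the tail-integral formula, and land directly on the clean intermediate expression $f_{\Delta_i}(x)=\bigl(\Pr\{\Delta_i^{\text{P}}>x\}-\Pr\{T>x\}\bigr)/\mathbb{E}[Y_i]$, which is exactly the paper's (\ref{p28}) once one uses $\Delta_i^{\text{P}}\sim Y_i+T$ with $Y_i,T$ independent; the remaining work is pure transform algebra with $L_T(s)=L_S(\lambda+s)/L_S(\lambda)$ and $L_{\Delta_i^{\text{P}}}(s)=L_{Y_i}(s)L_T(s)$. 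What your route buys is the elimination of the case-split integration and differentiation bookkeeping, an explicit reason why $f_{\Delta_i}\ge 0$ (stochastic dominance of $\Delta_i^{\text{P}}$ over $T$), and a transparent derivation of $\Delta_i\stackrel{d}{=}Y_i$ (the paper's Remark \ref{rm1}); what the paper's route buys is the self-contained time-domain formula (\ref{t28}) for the density that does not lean on Theorem \ref{thm pdfp}'s factorization. Your flagged caveats (renewal–reward justification of (\ref{p21}), differentiability of the tail) are exactly the points the paper also takes for granted, so nothing is missing relative to the published argument.
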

\begin{proof}
See Appendix B.
\end{proof}

For a specific service process, $P_{i}^{\text{A}}(w_i)$ and ${f_{\Delta _i^{\text{A}}}}(x)$ can be expressed explicitly based on Theorem \ref{cor p2} and ${{L_S}\left( \lambda  \right)}$.
\begin{rem}\label{rm1}
Note that, (\ref{ly}) and (\ref{fgfgfg}) imply
${f_{{\Delta _i}}}\left( x \right)={f_{{ Y _i}}}\left( x \right)$.
That is, in the multi-source M/G/1/1 bufferless preemptive system, the p.d.f. of the AoI of updates from source $i$ is the same as the p.d.f. of inter-departure time of the updates.  This is a unique and useful property, which indicates that for a multi-source bufferless preemptive system with Poisson arrival, no matter what the service process is,  the distribution of the corresponding AoI can be obtained via analysing the inter-departure time at the monitor.
\end{rem}
%Up to now, the violation probabilities of PAoI and AoI in the general multi-source status update system are obtained. Next, we consider the case where the transmission time between the transmitter and the receiver is exponentially distributed. The multi-source status update system can be modeled as multi-source M/M/1/1 bufferless preemptive system. Based on the results in this section, the violation probabilities and distributions are derived in closed form, and then used to optimize the overall timeliness of the system.
\section{Violation Probabilities in the M/M/1/1 System}\label{sec 4}
In this section, the violation probabilities and distributions of PAoI and AoI are derived in closed form, for the system with negative-exponentially distributed service time, i.e. multi-source M/M/1/1 bufferless preemptive system.

Based on Theorem \ref{thm pdfp} and Theorem \ref{cor p2}, it can be found that to derive the violation probabilities of AoI and PAoI in the M/M/1/1 system, one needs to know the p.d.f.s of $S$ and $Y_i$.

First, the p.d.f. of $S$ can be formally expressed as
\begin{align}\label{p29}
{f_S}\left( x \right) =
\begin{cases}
{\mu {e^{ - \mu x}}},&{x \geq 0,}\\
0,&{x < 0.}
\end{cases}
\end{align}
The LT of ${f_S}\left( x \right)$ can be directly obtained:
\begin{align}\label{p30}
{L_S}\left( s \right) = {\mu }/({{s + \mu }}).
\end{align}
Based on (\ref{p29}) and (\ref{p30}),  $f_{Y_{i}}(x)$ is given in the following.
\begin{lem}\label{lem 1}
In the multi-source M/M/1/1 bufferless preemptive system, the p.d.f. of the inter-departure time corresponding to source $i$ is given by
\begin{align}\label{p31}
{f_{Y_i}}(x)=
\begin{cases}
{\frac{{{\lambda _i}\mu }}{{a_i - b_i}}\left( {{e^{a_ix}} - {e^{b_ix}}} \right)},&{x \geq 0,}\\
0,&{x < 0,}
\end{cases}
\end{align}
where $a_i$ and $b_i$ are the solutions of the quadratic equation w.r.t. $s$, ${{s^2} + \left( {\lambda  + \mu } \right)s + {\lambda _i}\mu }=0$.
\end{lem}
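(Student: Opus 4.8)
The plan is to obtain $f_{Y_i}$ by first writing the Laplace transform $L_{Y_i}(s)$ in closed form via the general relation~(\ref{ly}), and then inverting it. Substituting the exponential service law: by~(\ref{p30}) one has $L_S(s)=\mu/(s+\mu)$, hence $L_S(\lambda+s)=\mu/(\lambda+\mu+s)$. Plugging this into~(\ref{ly}) and clearing denominators gives
\begin{align}
L_{Y_i}(s)=\frac{\lambda_i\,\frac{\mu}{\lambda+\mu+s}}{\lambda_i\,\frac{\mu}{\lambda+\mu+s}+s}
=\frac{\lambda_i\mu}{s^2+(\lambda+\mu)s+\lambda_i\mu},
\end{align}
which already displays the quadratic $s^2+(\lambda+\mu)s+\lambda_i\mu$ named in the statement, with roots $a_i$ and $b_i$.

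Next I would carry out a partial-fraction decomposition. Writing the denominator as $(s-a_i)(s-b_i)$ and assuming $a_i\neq b_i$, we get
\begin{align}
L_{Y_i}(s)=\frac{\lambda_i\mu}{(s-a_i)(s-b_i)}=\frac{\lambda_i\mu}{a_i-b_i}\left(\frac{1}{s-a_i}-\frac{1}{s-b_i}\right).
\end{align}
Applying the inverse Laplace transform termwise, with $L^{-1}\!\left[\tfrac{1}{s-a}\right](x)=e^{ax}$ for $x\geq 0$ and $0$ for $x<0$, immediately yields~(\ref{p31}).

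The only point requiring a little care is the distinctness $a_i\neq b_i$ used in the partial-fraction step. This follows from a discriminant check: since $\lambda=\sum_{j}\lambda_j\geq\lambda_i$, we have $(\lambda+\mu)^2-4\lambda_i\mu\geq(\lambda_i+\mu)^2-4\lambda_i\mu=(\lambda_i-\mu)^2\geq 0$, so the two roots are always real, and they coincide only in the degenerate corner case $\lambda=\lambda_i$ together with $\lambda_i=\mu$; away from it the representation~(\ref{p31}) is valid. (By Vieta's formulas $a_i+b_i=-(\lambda+\mu)<0$ and $a_ib_i=\lambda_i\mu>0$, so both roots are negative and $f_{Y_i}$ is a genuine, integrable density with $\int_0^\infty f_{Y_i}(x)\,{\text d}x=1$, consistent with $L_{Y_i}(0)=1$.) Thus there is no conceptual obstacle here — the lemma is essentially a direct corollary of~(\ref{ly}) and~(\ref{p30}), and the main work is the routine substitution and partial-fraction bookkeeping.
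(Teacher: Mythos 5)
Your proposal is correct and follows essentially the same route as the paper's proof: substitute $L_S(s)=\mu/(s+\mu)$ into (\ref{ly}) to get $L_{Y_i}(s)=\lambda_i\mu/\bigl(s^2+(\lambda+\mu)s+\lambda_i\mu\bigr)$, perform the partial-fraction decomposition with coefficients $\pm\lambda_i\mu/(a_i-b_i)$, and invert termwise. Your added discriminant check that $a_i\neq b_i$ (and that the roots are real and negative) is a small, correct refinement that the paper leaves implicit.
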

\begin{proof}
See Appendix C.
\end{proof}
Note that
\begin{align}\label{p37}
a_i+b_i=-(\lambda+\mu) \;{\text{and}}\;a_ib_i=\lambda_i \mu.
\end{align}
With these results, the violation probabilities and distributions of PAoI and AoI can be derived explicitly.
\subsection{ Violation Probability  of PAoI in the M/M/1/1 System}
Based on Theorem \ref{thm pdfp}, the violation probability and p.d.f. of PAoI are given by the following corollary.
\begin{cor}\label{cor p3}
In the multi-source M/M/1/1 bufferless preemptive system, given PAoI threshold $p_i>0$, the violation probability of PAoI corresponding to source $i$ is given by
%$i\in \mathcal{N}$
\begin{align}
P_i^{\text P}(p_i) = {e^{ - \left( {\lambda  + \mu } \right)p_i}} + \frac{{\lambda  + \mu }}{{a_i - b_i}}\left( {{e^{a_ip_i}} - {e^{b_ip_i}}} \right).
\end{align}
The corresponding p.d.f. of PAoI can be expressed as
\begin{align}
{f_{\Delta _i^{\text{P}}}}(x) \!=\!
\begin{cases}
{\left( {\lambda \! +\! \mu } \right)\!( {{e^{ - \left( {\lambda  + \mu } \right)x}} \!+ \!\frac{{b_i{e^{b_ix}} - a_i{e^{a_ix}}}}{{a_i - b_i}}} )},&{x \!\geq \!0,}\\
0,&{x\! <\! 0.}
\end{cases}
\end{align}
\end{cor}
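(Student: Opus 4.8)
The plan is to specialize the general formulas of Theorem~\ref{thm pdfp} to the exponential service time by substituting the explicit expressions $L_S(\lambda)=\mu/(\lambda+\mu)$ and $f_S(x)=\mu e^{-\mu x}$ from (\ref{p30})--(\ref{p29}), together with the inter-departure p.d.f. $f_{Y_i}(y)=\tfrac{\lambda_i\mu}{a_i-b_i}(e^{a_iy}-e^{b_iy})$ from Lemma~\ref{lem 1}. The only structural facts needed besides these substitutions are the Vieta relations (\ref{p37}), namely $a_i+b_i=-(\lambda+\mu)$ and $a_ib_i=\lambda_i\mu$.

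First I would evaluate the inner $t$-integral in (\ref{vpdfp}): $\int_0^{p_i-y} e^{-\lambda t} f_S(t)\,dt=\mu\int_0^{p_i-y} e^{-(\lambda+\mu)t}\,dt=\tfrac{\mu}{\lambda+\mu}\bigl(1-e^{-(\lambda+\mu)(p_i-y)}\bigr)$. Since $1/L_S(\lambda)=(\lambda+\mu)/\mu$, the prefactor in (\ref{vpdfp}) cancels the factor $\mu/(\lambda+\mu)$ exactly, leaving $P_i^{\text P}(p_i)=1-\int_0^{p_i} f_{Y_i}(y)\,dy+\int_0^{p_i} f_{Y_i}(y)\,e^{-(\lambda+\mu)(p_i-y)}\,dy$. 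The key simplification for the second integral is to write $e^{-(\lambda+\mu)(p_i-y)}=e^{(a_i+b_i)(p_i-y)}$ using (\ref{p37}); after pulling out $e^{(a_i+b_i)p_i}$ and inserting $f_{Y_i}$, every integrand becomes a sum of elementary exponentials ($e^{a_iy},e^{b_iy},e^{-a_iy},e^{-b_iy}$) that I would integrate termwise. Replacing $\lambda_i\mu$ by $a_ib_i$ in the resulting denominators and collecting terms, the constants $1$ cancel and the coefficients of $e^{a_ip_i}$ and $e^{b_ip_i}$ combine with the factor $a_i+b_i=-(\lambda+\mu)$, which yields $P_i^{\text P}(p_i)=e^{-(\lambda+\mu)p_i}+\tfrac{\lambda+\mu}{a_i-b_i}(e^{a_ip_i}-e^{b_ip_i})$.

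Finally, the p.d.f. of PAoI follows either by feeding the same substitutions into (\ref{pdfp}) and evaluating the convolution $(\lambda+\mu)\int_0^x e^{-(\lambda+\mu)(x-y)} f_{Y_i}(y)\,dy$ by the same termwise-exponential computation, or — more quickly — by using $f_{\Delta_i^{\text P}}(x)=-\tfrac{d}{dx}P_i^{\text P}(x)$ for $x\ge 0$ and $0$ for $x<0$: a one-line differentiation of the expression just obtained gives $(\lambda+\mu)\bigl(e^{-(\lambda+\mu)x}+\tfrac{b_ie^{b_ix}-a_ie^{a_ix}}{a_i-b_i}\bigr)$, matching the claimed form. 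I expect the main obstacle to be purely the exponential bookkeeping in the $y$-integrals; systematically converting $e^{-(\lambda+\mu)y}$ into $e^{(a_i+b_i)y}$ and eliminating $\lambda_i\mu$ via $a_ib_i$ keeps the algebra manageable and makes the cancellations transparent.
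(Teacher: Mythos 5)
Your proposal is correct and takes essentially the same approach as the paper: specialize Theorem~\ref{thm pdfp} with $f_S(x)=\mu e^{-\mu x}$, $L_S(\lambda)=\mu/(\lambda+\mu)$, the inter-departure p.d.f. of Lemma~\ref{lem 1}, and the Vieta relations (\ref{p37}). The only difference is the order of computation — the paper first evaluates the convolution in (\ref{pdfp}) to obtain $f_{\Delta_i^{\text{P}}}(x)$ and then integrates to get $P_i^{\text{P}}(p_i)$, whereas you evaluate (\ref{vpdfp}) directly and recover the p.d.f. by differentiating $1-P_i^{\text{P}}(x)$ — which is a purely cosmetic reordering of the same argument.
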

\begin{proof}
See Appendix D.
\end{proof}

\begin{rem}
From Corollary \ref{cor p3}, the average PAoI is given by
\begin{align}\label{apaoi}
\mathbb{E}\!\left[ {\Delta _i^{\text P}} \right]\!= \!\int_0^{ + \infty } \!{x{f_{\Delta _i^{\text{P}}}}(x){\text d}x} \mathop {\rm{ = }}\limits^{\left( {\text a} \right)}  \frac{1}{{\left( {\lambda  + \mu } \right)}} + \frac{{\lambda  + \mu }}{{{\lambda _i}\mu }},
\end{align}
in which the equality (a) holds following from (\ref{p37}).
Moreover, the mean square and variance of PAoI are given by
\begin{align}
\mathbb{E}\!\left[ {\!{{\left( {\Delta _i^{\text P}} \right)}^2}\!} \right]\!
= \!\!\int_0^{ + \infty }\! \!\!{{x^2}{f_{\Delta _i^{\text{P}}}}(x){\text d}x}
  \mathop {\rm{ = }}\limits^{\left( {\text b} \right)} \!\frac{2}{{{{\left( {\lambda \! +\! \mu } \right)}^2}}} \!+\! \frac{{2{{\left( {\lambda \! +\! \mu } \right)}^2}}}{{{\lambda _i}^2{\mu ^2}}},\\
\!\!\!\!\!\mathbb{D}\!\left[ {\Delta _i^{\text P}} \right]\!\! = \!\mathbb{E}\!\left[ {\!{{\left( {\!\Delta _i^{\text P}\!} \right)}^2}\!} \right]\!\! - \!{{\mathbb E}^2}\!\!\left[ {\Delta _i^{\text P}} \right]\! = \! \frac{1}{{{{\left( {\lambda \! + \!\mu } \right)}^2}}}\! + \!\frac{{{{\left( {\lambda\!  +\! \mu } \right)}^2}}}{{{\lambda _i}^2{\mu ^2}}}\!-\! \frac{2}{{{\lambda _i}\mu }}\!.\label{vp}
\end{align}
in which the equality (b) holds  following from (\ref{p37}).
\end{rem}
\subsection{Violation Probability of AoI in the M/M/1/1 System}
Based on Theorem \ref{cor p2}, the violation probability and p.d.f. of AoI are given by the following corollary.
\begin{cor}\label{cor p5}
In the multi-source M/M/1/1 bufferless preemptive system, given AoI threshold $w_i>0$, the violation probability of AoI corresponding to source $i$ is given by
\begin{align}
P_i^{\text A}(w_i) = \frac{{a_i{e^{b_iw_i}} - b_i{e^{a_iw_i}}}}{{a_i - b_i}}.
\end{align}
The corresponding p.d.f. of AoI can be expressed as
\begin{align}\label{yui}
{f_{{\Delta _i}}}\left( x \right)=
\begin{cases}
{\frac{{{\lambda _i}\mu }}{{a_i - b_i}}\left( {{e^{a_ix}} - {e^{b_ix}}} \right)},&{x \geq 0,}\\
0,&{x < 0.}
\end{cases}
\end{align}
\end{cor}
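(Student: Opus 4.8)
The plan is to get the p.d.f.\ of the AoI almost for free from results already in hand, and then to integrate it once to obtain the violation probability. First I would invoke Remark~\ref{rm1}, which states that $f_{\Delta_i}(x)=f_{Y_i}(x)$ in any multi-source M/G/1/1 bufferless preemptive system with Poisson arrivals. Since the service time here is negative-exponentially distributed, Lemma~\ref{lem 1} already supplies $f_{Y_i}(x)=\frac{\lambda_i\mu}{a_i-b_i}\left(e^{a_ix}-e^{b_ix}\right)$ for $x\ge 0$ (and $0$ for $x<0$), where $a_i,b_i$ are the roots of $s^2+(\lambda+\mu)s+\lambda_i\mu=0$. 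Combining the two facts yields (\ref{yui}) immediately, with no additional computation needed for the p.d.f.

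Next, Theorem~\ref{cor p2} gives $P_i^{\text A}(w_i)=1-\int_0^{w_i}f_{\Delta_i}(x)\,{\text d}x$, so the only remaining work is to evaluate $\int_0^{w_i}\frac{\lambda_i\mu}{a_i-b_i}\left(e^{a_ix}-e^{b_ix}\right){\text d}x=\frac{\lambda_i\mu}{a_i-b_i}\left(\frac{e^{a_iw_i}-1}{a_i}-\frac{e^{b_iw_i}-1}{b_i}\right)$. The key simplification is the Vieta identity $a_ib_i=\lambda_i\mu$ from (\ref{p37}), which converts $\lambda_i\mu/a_i$ into $b_i$ and $\lambda_i\mu/b_i$ into $a_i$; the integral then collapses to $1+\frac{b_ie^{a_iw_i}-a_ie^{b_iw_i}}{a_i-b_i}$, and subtracting from $1$ gives $P_i^{\text A}(w_i)=\frac{a_ie^{b_iw_i}-b_ie^{a_iw_i}}{a_i-b_i}$, which is exactly the claimed form.

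There is no genuinely hard step here; the plan is essentially bookkeeping. The only points I would be careful about are: (i) that $a_i\neq b_i$, so the displayed fractions are well defined — this holds because the discriminant $(\lambda+\mu)^2-4\lambda_i\mu\ge(\lambda-\mu)^2\ge 0$ is strictly positive whenever more than one source is active (equality would force $\lambda=\mu$ together with $\lambda_i=\lambda$); and (ii) consistency checks, namely $P_i^{\text A}(0)=1$ (since $\tfrac{a_i-b_i}{a_i-b_i}=1$) and $P_i^{\text A}(w_i)\to 0$ as $w_i\to\infty$ (since both roots are negative). As a further sanity check one can reuse the same algebra on $\int_0^{\infty}x\,f_{\Delta_i}(x)\,{\text d}x$ with (\ref{p37}) to recover the known average AoI, paralleling the computation in (\ref{apaoi}).
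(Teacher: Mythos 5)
Your proposal is correct and follows essentially the same route as the paper: the paper's proof likewise obtains (\ref{yui}) directly from Remark~\ref{rm1} combined with Lemma~\ref{lem 1}, and then computes $P_i^{\text A}(w_i)=1-\int_0^{w_i}f_{\Delta_i}(x)\,{\text d}x$, the simplification via $a_ib_i=\lambda_i\mu$ from (\ref{p37}) being exactly the algebra you spell out. Your added checks (well-definedness since $a_i\neq b_i$, and the limits at $w_i=0$ and $w_i\to\infty$) are sound but not part of the paper's argument.
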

\begin{proof}
Based on Remark \ref{rm1} and Lemma \ref{lem 1}, the p.d.f. of AoI is directly obtained, as shown by (\ref{yui}). Accordingly, the violation probability of AoI is derived as follows,
\begin{align}
 P_i^{\text A}\!(w_i) \!=\! \Pr\! \left\{ {\!{\Delta _i} > w_i\!} \right\} \!= \!1\! -\!\! \int_0^{w_i} \!\!\!{{f_{{\Delta _i}}}\!(x){\text d}x} \!=\! \frac{{a_i{e^{b_iw_i}}\! - \! b_i{e^{a_iw_i}}}}{{a_i - b_i}}\!.\nonumber
\end{align}
This ends the proof.
\end{proof}

\begin{rem}
Based on Corollary \ref{cor p5}, the average AoI is given by
\begin{align}\label{aaoi}
\mathbb{E}\left[ {{\Delta _i}} \right] = \int_0^{ + \infty } {x{f_{{\Delta _i}}}(x){\text d}x}    \mathop {\rm{ = }}\limits^{\left( {\text a} \right)}   \frac{{\lambda  + \mu }}{{{\lambda _i}\mu }},
\end{align}
in which the equality (a) holds following from (\ref{p37}).
The mean square and variance of AoI are given by
\begin{align}
\mathbb{E}\!\left[ {{{\left( {{\Delta _i}} \right)}^2}} \right]= \int_0^{ + \infty }\!\!\! {{x^2}{f_{{\Delta _i}}}(x){\text d}x} \mathop {\rm{ = }}\limits^{\left( {\text b} \right)}   \frac{{2{{\left( {\lambda  + \mu } \right)}^2}}}{{{\lambda _i}^2{\mu ^2}}} - \frac{2}{{{\lambda _i}\mu }},\\
\mathbb{D}\!\left[ {\Delta _i} \right] = \mathbb{E}\!\left[ {{{\left( {\Delta _i} \right)}^2}} \right] - {{\mathbb E}^2}\!\left[ {\Delta _i} \right] = \frac{{{{\left( {\lambda  + \mu } \right)}^2}}}{{{\lambda _i}^2{\mu ^2}}} - \frac{2}{{{\lambda _i}\mu }}.\label{va}
\end{align}
in which the equality (b) holds  following from (\ref{p37}).
\end{rem}

\textcolor{black}{(\ref{apaoi}) and (\ref{aaoi}) are consistent with the results of \cite{8406928}.}

\begin{rem}\label{ioi}
The variances of AoI and PAoI characterize the stability of the AoI process. According to (\ref{vp}) and (\ref{va}), it is found that the first derivatives of both $\mathbb{D}\!\left[ {\Delta _i} \right]$ and $\mathbb{D}\!\left[ {\Delta _i^{\text P}} \right]$ are ${{2( {{\lambda _i}\mu  - {{( {\lambda  + \mu } )}^2}} )} /({\lambda _i^3{\mu ^2}})}<0$. This indicates that both $\mathbb{D}\!\left[ {\Delta _i} \right]$ and $\mathbb{D}\!\left[ {\Delta _i^{\text P}} \right]$ monotonously decrease with $\lambda_i$. That is, the stability of the AoI process of source $i$ can be improved with an increased $\lambda_i$.
%Moreover, by introducing the average variances of AoI and PAoI, i.e. $\mathbb{D}^{\text A}:= {{\sum\nolimits_{i=1}^N \mathbb{D}\!\left[ {\Delta _i} \right] } \mathord{\left/{\vphantom {{\sum\nolimits_1^N {\left[ {{\Delta _i}} \right]} } N}} \right. \kern-\nulldelimiterspace} N}$ and $\mathbb{D}^{\text P}:= {{\sum\nolimits_{i=1}^N \mathbb{D}\!\left[ {\Delta _i^P} \right] } \mathord{\left/{\vphantom {{\sum\nolimits_1^N {\left[ {{\Delta _i}} \right]} } N}} \right. \kern-\nulldelimiterspace} N}$, we characterize the overall stability of the AoI processes of all sources. Since $\mathbb{D}\!\left[ {\Delta _i} \right]$ and $\mathbb{D}\!\left[ {\Delta _i^{\text P}} \right]$ both monotonously decrease with increased $\lambda_i$, $\forall i$, the minimal $\mathbb{D}^{\text A}$ and $\mathbb{D}^{\text P}$ are achieved when the arrival rate of each source is equal to ${\lambda  \mathord{\left/ {\vphantom {\lambda  N}} \right. \kern-\nulldelimiterspace} N}$.
%That is the optimal overall stability is achieved when we equally set the arrival rates for all sources.
\end{rem}

%Next, we will optimize the overall timeliness of this system in terms of violation probabilities of PAoI or AoI, respectively.
\section{Optimizing Overall Timeliness}\label{sec 5}
Recall that in the considered system, the total arrival rate is constrained due to the resource limitation of the IoT device.
In this section, we investigate how to allocate the arrival rates for sources to achieve the optimal overall timeliness under the resource constraint of IoT device, in terms of violation probabilities of PAoI and AoI respectively. 

To obtain more insights, we focus on the multi-source status update system with negative-exponentially distributed service time.
%To obtain more insights, we focus on the multi-source status update system with negative-exponentially distributed service time.
\textcolor{black}{The reason is as follows. Usually in IoT-based status update systems, the transmission rate, packet length and one-time transmission time are fixed.
Meanwhile, the service time, i.e., total transmission time is usually varying, since that the packet loss occasionally happens due to fluctuations of wireless channel and that retransmission  mechanism is often adopted to ensure the reliability of transmission.
To match the service time with the considered continuous AoI process, it is common to consider the service time as continuously varying and negative-exponentially distributed \cite{9217386,8469047,9705518,9732416,9312180,9324753,9103131}, since the negative-exponential distribution can be regarded as a continuous simulation of geometric distribution.}
%To obtain more insights, we focus on the multi-source status update system with negative-exponentially distributed service time in this section.}

In practice, different status update streams have different requirements on the probability of age threshold exceeding events. For instance, in the autonomous driving system, the different tolerance of outdated speed and engine temperature information would lead to different levels of the risk for driving. Low PAoI and AoI thresholds should be preset for status updates of sources holdong the potential of high risk. Therefore, it is common that $p_i\neq p_j$ and $w_i\neq w_j$ for $i\neq j$.
Motivated by this, the maximal violation probabilities
\begin{align}
P_{\text M}^{\text P}(\boldsymbol {\lambda};\boldsymbol {p}) &:= \max_{i\in \mathcal {N}}\{ P_i^{\text{P}}(\lambda_i;p_i)\}, \\
P_{\text M}^{\text A} (\boldsymbol {\lambda};\boldsymbol {w})&:= \max_{i \in  \mathcal {N}}\{ P_i^{\text{A}}(\lambda_i;w_i)\},
\end{align}
are introduced to measure the overall timeliness. $P_{\text M}^{\text P}(\boldsymbol {\lambda};\boldsymbol {p})$ (or $P_{\text M}^{\text A}(\boldsymbol {\lambda};\boldsymbol {w})$) is the maximum among the violation probabilities of PAoI (or AoI) of all sources, where $\boldsymbol {\lambda}=(\lambda_1,\lambda_2,\ldots,\lambda_N )$, $\boldsymbol {p}=(p_1,p_2,\ldots,p_N )$ and $\boldsymbol {w}=(w_1,w_2,\ldots,w_N )$.
%$P_{\text M}^{\text A}$ stands for the maximum among all violation probabilities of AoI of sources.

Given total arrival rate $\lambda$, let us investigate the arrival rate allocation $\boldsymbol {\lambda}$  minimizing the maximal violation probabilities $P_{\text{M}}^{\text{P}}$ or $P_{\text{M}}^{\text{A}}$. Accordingly, the corresponding two optimization problems are formulated as follows:
\textcolor{black}{
\begin{align}
\text{Problem }{\mathcal{P}_1 :}\label{P1}
\mathop {\min }\limits_{\boldsymbol {\lambda} }  &\quad P_{\text M}^{\text P}(\boldsymbol {\lambda};\boldsymbol {p})= \max_{i\in \mathcal {N}}\{ P_i^{\text{P}}(\lambda_i;p_i)\} \\
 {\text{s}}.\;{\text{t}}.& \quad\sum\nolimits_{i = 1}^N  {{\lambda _i} - \lambda  = 0}, \tag{\ref{P1}{a}}\label{P1a}\\
 & \quad{\lambda _i} > 0,\forall i \in \mathcal{N} ;\tag{\ref{P1}{b}}\\
\text{Problem }{\mathcal{P}_2 :}\label{P2}
\mathop {\min }\limits_{\boldsymbol {\lambda} }  &\quad P_{\text M}^{\text A} (\boldsymbol {\lambda};\boldsymbol {w})= \max_{i \in  \mathcal {N}}\{ P_i^{\text{A}}(\lambda_i;w_i)\}\\
 {\text{s}}.\;{\text{t}}.&\quad \sum\nolimits_{i = 1}^N  {{\lambda _i} - \lambda  = 0}, \tag{\ref{P2}{a}}\label{P2a}\\
\;\quad \quad \quad \ &\quad {\lambda _i} > 0,\forall i \in \mathcal{N} .\tag{\ref{P2}{b}}
\end{align}}
To solve ${\mathcal{P}_1}$ and ${\mathcal{P}_2}$, several findings on the properties of both problems are presented by the following propositions.
\begin{prop}\label{pp1}
$\mathcal{P}_1$ is a convex optimization problem. Additionally, $P_i^{\text{P}}(\lambda_i;p_i)$ monotonically decreases with $\lambda_i$, regardless of $\lambda_j$, for all $j\neq i$.
\end{prop}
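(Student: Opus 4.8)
The plan is to argue directly from the closed-form expression in Corollary~\ref{cor p3} together with the Vieta relations \eqref{p37}. Since the total rate $\lambda$ is frozen by the equality constraint \eqref{P1a}, write $c:=\lambda+\mu$ (a constant on the feasible set) and treat $P_i^{\text P}$ as a function of $\lambda_i$ alone. From $a_i+b_i=-c$ and $a_ib_i=\lambda_i\mu$ we get $a_i-b_i=d_i:=\sqrt{c^2-4\lambda_i\mu}$, which is real and strictly positive on the admissible range (since $\lambda_i<\lambda$ gives $4\lambda_i\mu<4\lambda\mu\le c^2$ by AM--GM; the boundary case $d_i=0$ is handled by continuity, as all formulas below extend). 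Substituting $a_i,b_i=\tfrac12(-c\pm d_i)$ and setting $q:=p_i/2$ collapses Corollary~\ref{cor p3} to
\[
P_i^{\text P}(\lambda_i;p_i)\;=\;e^{-2cq}\;+\;2c\,e^{-cq}\,\frac{\sinh(d_iq)}{d_i},
\]
so the entire dependence on $\lambda_i$ enters through $d_i$. For the derivative bookkeeping it is convenient to use the affine substitution $z:=\lambda_i\mu$ (convexity and monotonicity in $z$ are equivalent to those in $\lambda_i$), with $d_i=\sqrt{c^2-4z}$, $d_i'(z)=-2/d_i$, $d_i''(z)=-4/d_i^{\,3}$.

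\textbf{Monotonicity.} I would show $G(d):=\sinh(dq)/d$ is strictly increasing on $d>0$. One has $G'(d)=\bigl(qd\cosh(qd)-\sinh(qd)\bigr)/d^2$, which is positive because $x>\tanh x$ for $x>0$ (apply with $x=qd$). Since $d_i$ is strictly decreasing in $z$ and hence in $\lambda_i$, the chain rule gives $dP_i^{\text P}/d\lambda_i<0$. The ``regardless of $\lambda_j$'' clause is then automatic: with $\lambda$ fixed, $P_i^{\text P}(\lambda_i;p_i)$ contains no $\lambda_j$ for $j\neq i$.

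\textbf{Convexity of each $P_i^{\text P}$.} Differentiating twice in $z$ and collecting terms, the claim reduces to a single scalar inequality: one obtains
\[
\frac{d^2P_i^{\text P}}{dz^2}\;=\;\frac{8c\,e^{-cq}}{d_i^{\,5}}\,H(qd_i),\qquad H(x):=x^2\sinh x-3x\cosh x+3\sinh x,
\]
so it suffices that $H(x)>0$ for $x>0$. This follows from $H(0)=0$ and $H'(x)=x^2\cosh x-x\sinh x=x\bigl(x\cosh x-\sinh x\bigr)>0$ on $x>0$ (again the $x>\tanh x$ inequality, equivalently $x\cosh x-\sinh x$ vanishes at $0$ with derivative $x\sinh x>0$). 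Hence $P_i^{\text P}$ is strictly convex in $z$, therefore in $\lambda_i$.

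\textbf{Wrap-up.} The feasible set of $\mathcal P_1$ is the intersection of the affine hyperplane \eqref{P1a} with the open positive orthant, hence convex; on this set $\lambda=\sum_k\lambda_k$ is a fixed constant, so each $P_i^{\text P}(\lambda_i;p_i)$ restricted to it is a convex function of the single coordinate $\lambda_i$ and thus convex in $\boldsymbol\lambda$, and the objective $\max_{i\in\mathcal N}P_i^{\text P}(\lambda_i;p_i)$, being a pointwise maximum of these, is convex. Therefore $\mathcal P_1$ is a convex optimization problem. The step I expect to be the main obstacle is the convexity half — specifically, keeping the chain-rule computation of $d^2P_i^{\text P}/dz^2$ tidy enough that it visibly reduces to the inequality $H(x)\ge 0$; once that form is reached the inequality is immediate, and the monotonicity argument and the convex-analysis conclusion are routine.
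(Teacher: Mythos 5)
Your proof is correct and follows essentially the same route as the paper's Appendix E: show each $P_i^{\text P}(\lambda_i;p_i)$ is decreasing and convex in $\lambda_i$ (with $\lambda$ frozen by the equality constraint, so no $\lambda_j$-dependence), reduce each sign claim to an elementary scalar inequality proved by differentiation, and conclude via max-of-convex-functions plus convexity of the feasible set. Your hyperbolic reparametrization with $d_i$, $q=p_i/2$ and $z=\lambda_i\mu$ is just tidier bookkeeping of the same computation — indeed $e^{-x/2}g(x)=8H(x/2)$ for the paper's $g(x)=(x^2-6x+12)e^x-(x^2+6x+12)$, so your inequality $H(x)>0$ is equivalent to the paper's $g(x)>0$, and the inequality $x\cosh x>\sinh x$ plays the role of the paper's $h(x)=(x-2)e^x+x+2>0$ in the monotonicity step.
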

\begin{proof}
See Appendix E.
\end{proof}

\begin{prop}\label{pp2}
$\mathcal{P}_2$ is a convex optimization problem. Additionally, $P_i^{\text{A}}(\lambda_i;w_i)$ monotonically decreases with $\lambda_i$, regardless of  $\lambda_j$, for all $j\neq i$.
\end{prop}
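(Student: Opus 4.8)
The plan is to prove the two assertions separately, as in Proposition~\ref{pp1}: first that $P_i^{\text{A}}(\lambda_i;w_i)$ is strictly decreasing in $\lambda_i$, and then that it is convex in $\lambda_i$; convexity of $\mathcal{P}_2$ then follows at once because the feasible set is convex and the objective is a pointwise maximum of convex functions. Note that the total rate $\lambda$ is held fixed by constraint~\eqref{P2a}, so $P_i^{\text{A}}(\lambda_i;w_i)$ depends on the decision vector only through the coordinate $\lambda_i$; this is exactly the ``regardless of $\lambda_j$'' clause.

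I would start from the closed form of Corollary~\ref{cor p5}, $P_i^{\text{A}}(\lambda_i;w_i)=\frac{a_i e^{b_iw_i}-b_i e^{a_iw_i}}{a_i-b_i}$, where $a_i,b_i$ solve $s^2+(\lambda+\mu)s+\lambda_i\mu=0$. Since $\lambda$ is fixed, $a_i+b_i=-(\lambda+\mu)$ is constant and only $a_ib_i=\lambda_i\mu$ varies with $\lambda_i$; because $\lambda_i\le\lambda$ and $(\lambda+\mu)^2\ge4\lambda\mu$, the discriminant $(\lambda+\mu)^2-4\lambda_i\mu$ is nonnegative (and positive on the interior of the feasible set), so $a_i,b_i$ are distinct negative reals and $d_i:=a_i-b_i=\sqrt{(\lambda+\mu)^2-4\lambda_i\mu}$ is strictly decreasing in $\lambda_i$. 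Writing $a_i=\tfrac{-(\lambda+\mu)+d_i}{2}$, $b_i=\tfrac{-(\lambda+\mu)-d_i}{2}$ and collecting exponentials into hyperbolic functions gives the compact form
\begin{align}\label{eq:PAhyp}
P_i^{\text{A}}(\lambda_i;w_i)=e^{-(\lambda+\mu)w_i/2}\left[\frac{\lambda+\mu}{d_i}\sinh\frac{d_iw_i}{2}+\cosh\frac{d_iw_i}{2}\right].
\end{align}

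Read as a function of $d_i>0$, the bracket in \eqref{eq:PAhyp} is increasing, since $t\mapsto\sinh(t)/t$ and $t\mapsto\cosh t$ are increasing on $(0,\infty)$ and $d_i\mapsto d_iw_i/2$ is increasing; because $d_i$ strictly decreases in $\lambda_i$, $P_i^{\text{A}}(\lambda_i;w_i)$ strictly decreases in $\lambda_i$, proving the monotonicity claim. For convexity, put $v_i:=d_i^2=(\lambda+\mu)^2-4\mu\lambda_i$, which is affine (and decreasing) in $\lambda_i$; substituting $\cosh t=\sum_{n\ge0}t^{2n}/(2n)!$ and $\sinh(t)/t=\sum_{n\ge0}t^{2n}/(2n+1)!$ into \eqref{eq:PAhyp} expresses $P_i^{\text{A}}$ as a power series in $v_i$ with all coefficients strictly positive (as $\lambda+\mu>0$ and $w_i>0$). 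Hence $P_i^{\text{A}}$ is a convex function of $v_i$ on $[0,\infty)$, and, being the composition of that convex function with the affine map $\lambda_i\mapsto v_i$, it is convex in $\lambda_i$, and therefore convex in $\boldsymbol{\lambda}$. Consequently $P_{\text{M}}^{\text{A}}(\boldsymbol{\lambda};\boldsymbol{w})=\max_{i\in\mathcal{N}}P_i^{\text{A}}(\lambda_i;w_i)$ is a pointwise maximum of convex functions, hence convex, and the feasible set $\{\boldsymbol{\lambda}:\sum_{i=1}^N\lambda_i=\lambda,\ \lambda_i>0\}$ is convex, so $\mathcal{P}_2$ is a convex optimization problem.

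The main obstacle is the convexity of $P_i^{\text{A}}$ in $\lambda_i$: a direct second-derivative computation is awkward because $a_i,b_i$ depend on $\lambda_i$ only implicitly, through a quadratic. The device that removes this is the change of variable to $v_i=d_i^2$, which is affine in $\lambda_i$, combined with the observation that in the variable $v_i$ the function has a power series with non-negative coefficients (so it is in fact absolutely monotone and convexity is free). One should also note that the degenerate case $d_i\to0$, i.e.\ $v_i\to0$, is harmless: the right-hand side of \eqref{eq:PAhyp} extends continuously there and the power-series form is valid on all of $[0,\infty)$, so no feasible allocation is left out of the argument.
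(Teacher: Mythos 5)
Your proof is correct, but it takes a genuinely different route from the one in the paper. The paper (Appendix F) works directly with the closed form of Corollary \ref{cor p5}: it computes the first and second partial derivatives of $P_i^{\text{A}}(\lambda_i;w_i)$ in $\lambda_i$ using $a_i'=-\mu/D_i$, $b_i'=\mu/D_i$, $D_i'=-2\mu/D_i$, observes that the Hessian is diagonal with only the $(i,i)$ entry possibly nonzero, and then establishes the sign of that entry (and of the first derivative) by introducing auxiliary functions $f_i(x)=(-b_ix^2+(4b_i+2a_i)x-6(a_i+b_i))e^x+a_ix^2+(4a_i+2b_i)x+6(a_i+b_i)$ and $r_i(x)=(b_ix+(\lambda+\mu))e^x+a_ix-(\lambda+\mu)$, differentiating them up to three times and checking values at zero. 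You instead rewrite $P_i^{\text{A}}=e^{-(\lambda+\mu)w_i/2}\bigl[\tfrac{\lambda+\mu}{D_i}\sinh\tfrac{D_iw_i}{2}+\cosh\tfrac{D_iw_i}{2}\bigr]$ (which is indeed an identity, by $a_i=\tfrac{-(\lambda+\mu)+D_i}{2}$, $b_i=\tfrac{-(\lambda+\mu)-D_i}{2}$), get strict monotonicity from the monotonicity of $t\mapsto\sinh(t)/t$ and $t\mapsto\cosh t$ together with the fact that $D_i$ decreases in $\lambda_i$, and get convexity from the affine change of variable $v_i=D_i^2=(\lambda+\mu)^2-4\mu\lambda_i$ plus the observation that the bracket is a power series in $v_i$ with positive coefficients. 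What your route buys: no Hessian computation and no ad hoc auxiliary functions, strictness of both monotonicity and convexity essentially for free, and an explicit check (missing in the paper) that the discriminant $(\lambda+\mu)^2-4\lambda_i\mu$ is nonnegative on the whole feasible set so that $a_i,b_i$ are real and $D_i>0$, including the degenerate limit $D_i\to0$. What the paper's route buys: fully explicit derivative formulas and a treatment that runs completely in parallel with the PAoI case of Proposition \ref{pp1} (the same differentiation machinery handles both). Both arguments finish identically: the pointwise maximum of convex functions is convex and the feasible set defined by \eqref{P2a} and $\lambda_i>0$ is convex, so $\mathcal{P}_2$ is a convex problem, and $\partial P_i^{\text{A}}/\partial\lambda_j=0$ for $j\neq i$ gives the ``regardless of $\lambda_j$'' clause.
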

\begin{proof}
See Appendix F.
\end{proof}

As Propositions \ref{pp1} and \ref{pp2} reveal that both $\mathcal{P}_1$ and $\mathcal{P}_2$ are convex, the optimal arrival rate allocations can be obtained by employing standard convex optimization algorithms. Note that there are many low complexity algorithms for convex optimization problems, e.g., Newton method with log-barrier \cite{23081055}. We will further investigate the performance achieved by convex optimization with numerical results.
\textcolor{black}{In addition, a sufficient and necessary condition for achieving the minimum of the maximal violation probability of AoI (or PAoI) is shown in the following proposition.}
\textcolor{black}{
\begin{prop}\label{pp3}
The minimum of the maximal violation probability of AoI (or PAoI) is achieved only if the violation probabilities of AoI (or PAoI) for all sources are equal.
\end{prop}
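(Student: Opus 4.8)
The plan is to establish this necessary (``only if'') condition by contraposition: I will show that if a feasible allocation $\boldsymbol{\lambda}$ does \emph{not} equalize the violation probabilities, then it cannot attain the minimum, because a small redistribution of arrival rate strictly lowers the maximal violation probability. The same argument handles $\mathcal{P}_1$ and $\mathcal{P}_2$, because Propositions \ref{pp1} and \ref{pp2} together with the closed forms in Corollaries \ref{cor p3} and \ref{cor p5} guarantee that every $P_i^{\text{P}}(\lambda_i;p_i)$ and every $P_i^{\text{A}}(\lambda_i;w_i)$ is continuous and \emph{strictly} decreasing in $\lambda_i$, and, once the total rate $\lambda$ is held fixed, is a function of $\lambda_i$ alone. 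I will write the proof for the AoI case $\mathcal{P}_2$; the PAoI case is verbatim after replacing $P^{\text{A}}(\cdot\,;w_i)$ by $P^{\text{P}}(\cdot\,;p_i)$.

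First I would fix any feasible $\boldsymbol{\lambda}$ (so $\lambda_i>0$ and $\sum_{i\in\mathcal{N}}\lambda_i=\lambda$), put $P_{\text{M}}:=\max_{i\in\mathcal{N}}P_i^{\text{A}}(\lambda_i;w_i)$, and let $\mathcal{M}:=\{\,i\in\mathcal{N}: P_i^{\text{A}}(\lambda_i;w_i)=P_{\text{M}}\,\}$ be the set of bottleneck sources. If $\mathcal{M}=\mathcal{N}$ the violation probabilities are already all equal and there is nothing to prove, so assume $\mathcal{M}\subsetneq\mathcal{N}$. For each $i\notin\mathcal{M}$ we have the \emph{strict} inequality $P_i^{\text{A}}(\lambda_i;w_i)<P_{\text{M}}$, so by continuity of $P_i^{\text{A}}(\cdot\,;w_i)$ there is a $\delta_i\in(0,\lambda_i)$ with $P_i^{\text{A}}(\lambda_i-\delta_i;w_i)<P_{\text{M}}$. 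Collect the total slack $\delta:=\sum_{i\notin\mathcal{M}}\delta_i>0$ and hand it to the bottleneck sources: define $\boldsymbol{\lambda}'$ by $\lambda_i':=\lambda_i-\delta_i$ for $i\notin\mathcal{M}$ and $\lambda_m':=\lambda_m+\delta/|\mathcal{M}|$ for $m\in\mathcal{M}$. Then all $\lambda_i'>0$ and $\sum_{i\in\mathcal{N}}\lambda_i'=\lambda$, so $\boldsymbol{\lambda}'$ is feasible. For $i\notin\mathcal{M}$ we have $P_i^{\text{A}}(\lambda_i';w_i)<P_{\text{M}}$ by the choice of $\delta_i$, and for $m\in\mathcal{M}$ strict monotonicity gives $P_m^{\text{A}}(\lambda_m';w_m)<P_m^{\text{A}}(\lambda_m;w_m)=P_{\text{M}}$; hence $\max_{i\in\mathcal{N}}P_i^{\text{A}}(\lambda_i';w_i)<P_{\text{M}}$, so $\boldsymbol{\lambda}$ is not optimal for $\mathcal{P}_2$. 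Contraposition yields the claim (a minimizer exists by continuity and compactness with all $\lambda_i>0$, since the objective tends to $1$ on the part of the simplex boundary where some $\lambda_i\to 0^+$).

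I expect the only genuinely delicate point to be that a \emph{single} pairwise rate transfer from one non-bottleneck source to one bottleneck source does \emph{not} suffice when $|\mathcal{M}|>1$: such a transfer leaves every other source of $\mathcal{M}$ at $P_{\text{M}}$, so the maximum is unchanged. The remedy, as above, is to draw slack from \emph{every} non-bottleneck source simultaneously and share it among \emph{all} bottleneck sources at once, so that each $P_m^{\text{A}}$, $m\in\mathcal{M}$, is pushed strictly below $P_{\text{M}}$ while no non-bottleneck source crosses $P_{\text{M}}$. Everything else — strict monotonicity and continuity of $P_i^{\text{A}}$ (resp.\ $P_i^{\text{P}}$) in $\lambda_i$, and the feasibility bookkeeping $\lambda_i'>0$ and $\sum_{i}\lambda_i'=\lambda$ — is routine and already supplied by Propositions \ref{pp1}--\ref{pp2} and the closed-form corollaries.
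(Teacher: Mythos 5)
Your proof is correct, and it takes a recognizably different route from the paper's. The paper argues by contradiction against a specific reference point: it lets $\boldsymbol{\lambda}^*$ be the minimizer and $\boldsymbol{\lambda}^\dag$ be the allocation at which all violation probabilities are equal, notes that the rates summed over the set $\mathcal{M}$ where they differ must coincide, invokes the pigeonhole principle to find some $m'$ with $\lambda_{m'}^*<\lambda_{m'}^\dag$, and then uses the strict monotonicity of $P_{m'}$ in $\lambda_{m'}$ (from Propositions \ref{pp1}--\ref{pp2}) to get $P_{\text{M}}(\boldsymbol{\lambda}^*)>P_{\text{M}}(\boldsymbol{\lambda}^\dag)$, contradicting optimality; this yields $\boldsymbol{\lambda}^*=\boldsymbol{\lambda}^\dag$. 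You instead argue by contraposition with an explicit improving perturbation: at any feasible allocation whose probabilities are not all equal, you drain a small amount of rate (controlled by continuity) from every non-bottleneck source and spread it over all bottleneck sources, strictly lowering the maximum. The two arguments rest on the same engine — strict monotonicity of each $P_i$ in its own $\lambda_i$ together with separability — but your version buys two things: it does not presuppose the existence of an equal-probability allocation $\boldsymbol{\lambda}^\dag$ (which the paper's proof uses implicitly without justification), and it handles the case of several simultaneous bottleneck sources explicitly, which is exactly the point where a naive single pairwise transfer would fail. The paper's version is shorter and, as a byproduct, identifies the minimizer with the equalizing allocation outright; your added remark on existence of a minimizer (the objective tends to $1$ as any $\lambda_i\to 0^+$) is not needed for the ``only if'' claim but is a harmless strengthening.
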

\begin{proof}
Let us take $\mathcal{P}_1$ as an example, as $\mathcal{P}_1$ and $\mathcal{P}_2$ possess similar properties. Recall that $P_i^{\text{A}}$ monotonically decreases with $\lambda_i$, regardless  of $\lambda_j$, for all $j\neq i$.
\textcolor{black}{Denote the optimal arrival rate allocation and the arrival rate allocation when the violation probabilities of AoI are equal, by $\boldsymbol {\lambda}^*$ and $\boldsymbol {\lambda}^{\dag}$, respectively.
Suppose there exists a set $\mathcal {M}  \subset  \mathcal {N}$, such that for all $m\in \mathcal {M}$, $\lambda_m^*\neq \lambda_m^{\dag}$. Note that $  \sum\nolimits_{m \in \mathcal{M}} {\lambda _i^*}= \sum\nolimits_{m \in \mathcal{M}} {\lambda _i^\dag} $.
Hence, it is deduced that there must exist $m'\in \mathcal{M}$ such that $\lambda_{m'}^*<\lambda_{m'}^\dag$, by using the pigeonhole principle.
This implies $P_{\text{M}}^{{\text{A}}}(\boldsymbol {\lambda}^*)\geq P_{m'}^{{\text{A}}}(\lambda_{m'}^*) > P_{m'}^{{\text{A}}}(\lambda_{m'}^\dag) =P_{\text{M}}^{{\text{A}}}(\boldsymbol {\lambda}^\dag)$, following from the monotonicity of $P_i^{\text{A}}(\lambda_i)$.
Note that $P_{\text{M}}^{{\text{A}}}(\boldsymbol {\lambda}^*)>P_{\text{M}}^{{\text{A}}}(\boldsymbol {\lambda}^\dag)$ contradicts with the fact $P_{\text{M}}^{{\text{A}}}(\boldsymbol {\lambda}^*)$ is the minimum of the maximal violation probability of AoI.
Hence, one can conclude that $\mathcal {M}$ should be an empty set, which implies $\boldsymbol {\lambda} ^*=\boldsymbol {\lambda}^\dag$.}
\end{proof}}

\section{Numerical Results}\label{sec num}
In this section, we first present the violation probabilities of AoI and PAoI as functions of AoI and PAoI thresholds, respectively.
%Then, we valid that, for the multi-source M/G/1/1 bufferless preemptive system, the distributions of the violation probabilities of AoI and the inter-departure time are the same, by considering two specific systems, i.e., M/M/1/1 and M/D/1/1 systems.
%Additionally, it is also found that the timeliness of the M/M/1/1 system is better than that of M/D/1/1 system.
%Finally, we study the improvement of the overall timeliness brought by the optimal arrival rate allocations in terms of violation probability of AoI or PAoI, respectively.
The simulations are also provided as validation checks, where $6\times 10^5$ updates are generated and $1 \times 10^6$ moments are sampled.
Moreover, the effects of service distribution on the violation probabilities, and that of arrival rate allocation on the maximal violation probabilities are analyzed.
%Finally, the overall timeliness gains from the optimal arrival rate allocations are shown.
Finally, we study the improvement of the overall timeliness achieved by the optimal arrival rate allocations.
%Also, based on the simulation results, for several systems where service times are differently distributed, the relationship between the distributions of inter-departure time and AoI, and the violation probabilities of AoI as functions of corresponding arrival rate, are shown.
%Then, we respectively investigate the effect of arrival rate allocations on the overall timeliness in terms of the maximal violation probabilities of AoI and PAoI.
\subsection{Violation Probabilities of AoI and PAoI}
Let us show the violation probabilities of AoI and PAoI for the M/M/1/1 system by Fig. \ref{fig:vp}.
%We set $\mu=1$, $\lambda=0.6$ and $\lambda_1=0.2$ for the considered system.
Source $1$ is focused on.
%\begin{figure}[!t]
%\centering
%\subfloat[Violation probability of AoI.]{\centering
%        \includegraphics[width=0.5\linewidth]{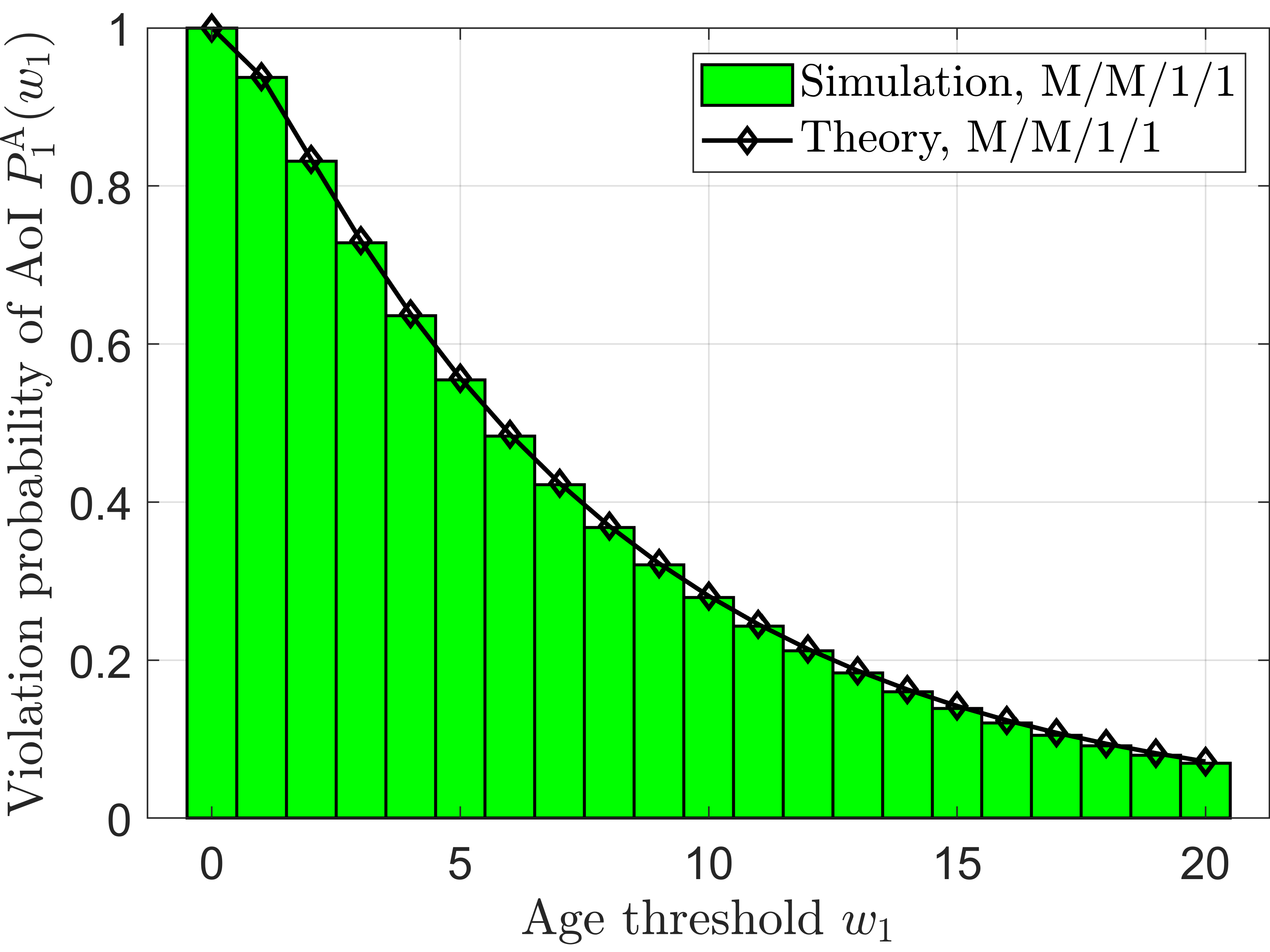}
%        \label{fig:vpaoi}}
%\subfloat[Violation probability of PAoI.]{\centering
%        \label{fig:vppaoi}
%        \includegraphics[width=0.5\linewidth]{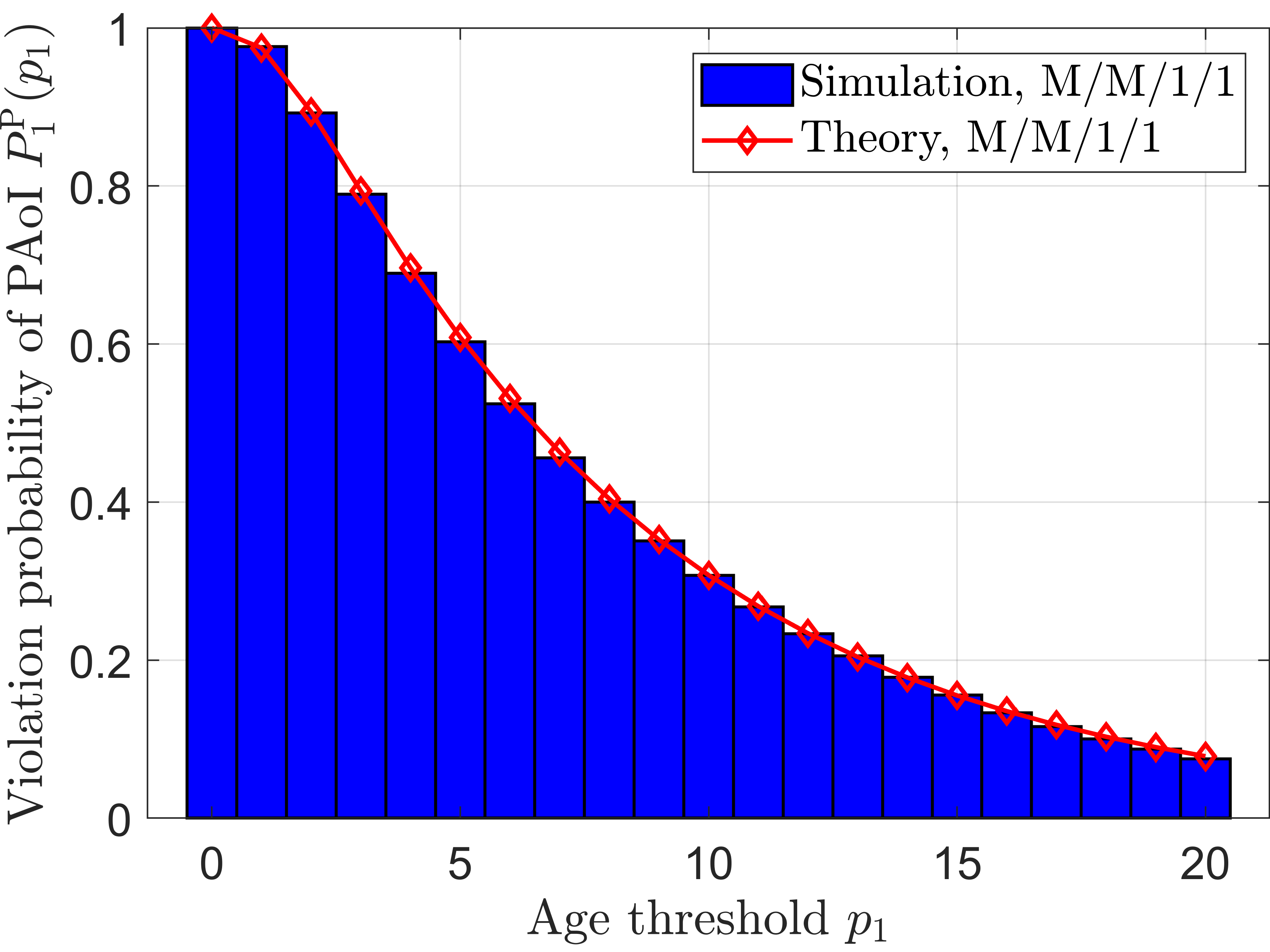}}
%\caption{Violation probabilities of AoI and PAoI of source $1$ as functions of the corresponding thresholds. $\mu=1$, $\lambda=0.6$ and $\lambda_1=0.2$.}
%\label{fig:vp}
%\end{figure}
\begin{figure}[!t]


  %\centering
  %\hspace*{\fill}
  \subfigure[Violation probability of AoI.]{
    \label{fig:vpaoi} %% label for first subfigure
    \includegraphics[width=0.238\textwidth]{vpaoi.png}}%\hfill
  \subfigure[Violation probability of PAoI.]{
    \label{fig:vppaoi} %% label for second subfigure
    \includegraphics[width=0.238\textwidth]{vppaoi.png}}
    %\hspace*{\fill}
  \caption{Violation probabilities of AoI and PAoI of source $1$ as functions of the corresponding thresholds. $\mu=1$, $\lambda=0.6$ and $\lambda_1=0.2$.}
  \label{fig:vp} %% label for entire figure
\end{figure}
Fig. \ref{fig:vp}\subref{fig:vpaoi} presents a black curve and green bars, which reflect the theoretical and simulation values of $P_1^{\text A}(w_1)$, respectively.
First, it is found that the theoretical result matches well with the simulation result, which indicates the correctness of Corollary \ref{cor p2}.
Moreover, it can be seen that $P_1^{\text A}(w_1)$ monotonically  decreases with $w_1$, which is consistent with the intuition.
Besides, with the same system parameters, the violation probability of PAoI is shown in Fig. \ref{fig:vp}\subref{fig:vppaoi}.
The red curve and the blue bars respectively represent the theoretical and simulation values of violation probability of PAoI of source $1$.
Since they are extremely close, the validation of Corollary \ref{cor p3} is confirmed.
\subsection{Effect of Service Time Distribution}
\begin{figure}[!t]
\centering
\includegraphics[width=0.75\linewidth]{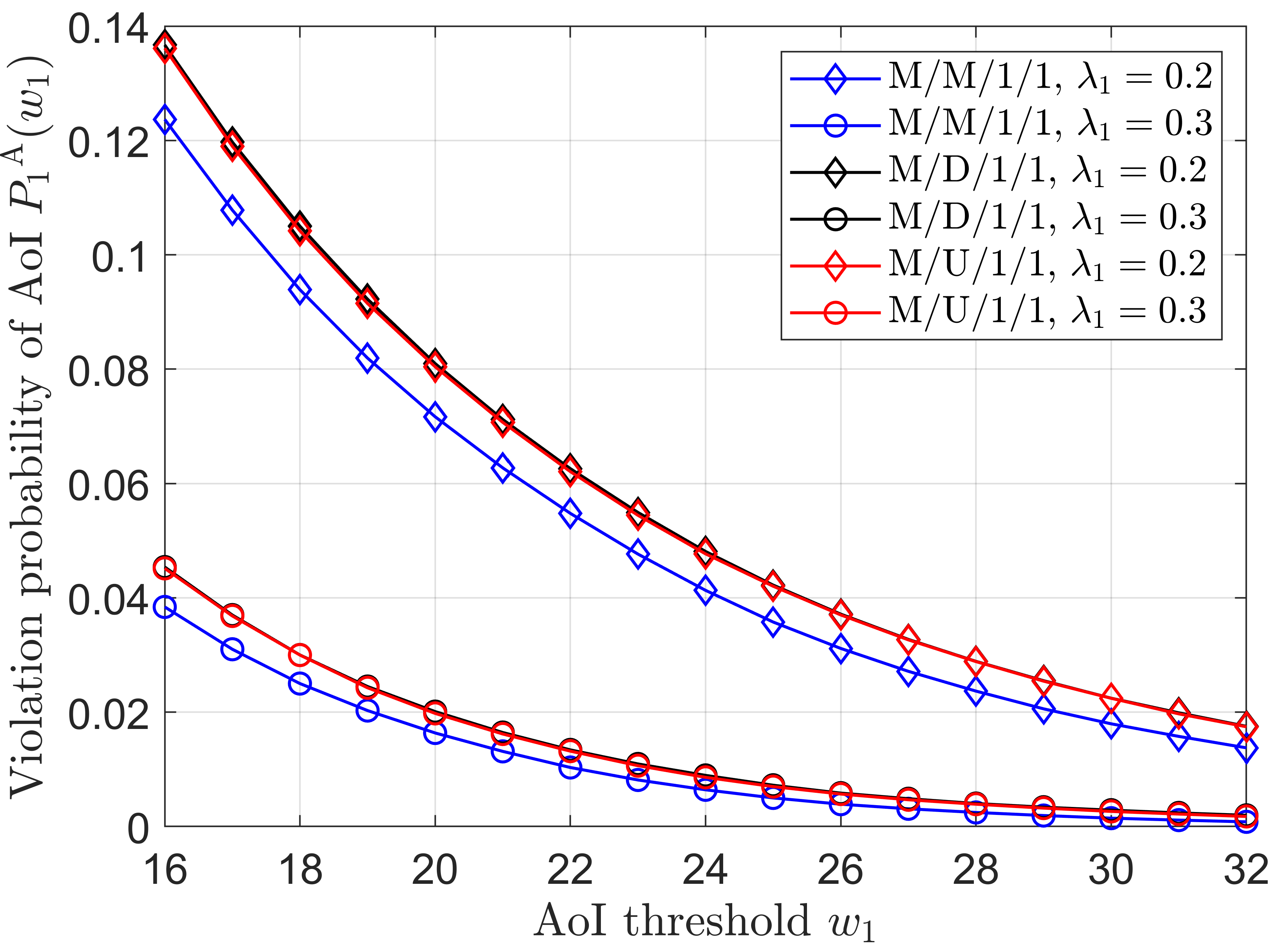}
\caption{Violation probabilities of AoI of source $1$ in the M/M/1/1, M/D/1/1 and M/U/1/1 systems as functions of the AoI threshold $w_1$. $\mu=1$ and $\lambda=0.6$.}
\label{fig:cdf}
\end{figure}
The effect of service time distribution is studied, as shown in Fig. \ref{fig:cdf}.
%\footnote{\textcolor{black}{The low AoI threshold case is studied, where the effect of service time distribution can be better presented.}}
Specifically, the negative-exponentially distributed, deterministic and uniformly distributed service times are considered as examples.
%We focus on the behaviours of  source 1.
We focus on Source 1.
The blue, black and red curves represent the violation probabilities of AoIs with negative-exponentially distributed, deterministic, and uniformly distributed service times, respectively.
Curves with diamond or circle markers correspond to the cases where $\lambda_1$ $=$ $0.2 \text{ or } 0.3$, respectively.
%\textcolor{black}{
First, it is shown that the violation probability with negative-exponentially distributed service time is lower than that with uniformly distributed service time or deterministic distributed service time.
This indicates that the M/M/1/1 system outperforms the M/U/1/1 and M/D/1/1 systems.
\textcolor{black}{The reason is that the negative-exponentially distributed service time possesses the memoryless property, which is a significant property to improve the timeliness.
Specifically, apart from that the preemption avoids waiting, when the preemption occurs, the memoryless service time enables the preemptive queueing system to serve the new update within the same time as that used for continuing serving the replaced update if there is no preemption. This further lessens the system time.}
%Specifically, apart from that the preemption zeros the waiting time, when the preemption occurs, the memoryless service time enables the preemptive queueing system to serve the new update with the same time as the time continuing serving the replaced update if there is no preemption, which further lessens the system time.
%Specifically, in addition to preemption zeroing out wait times, when preemption occurs, memoryless service time enables preemptive queuing systems to service new updates for the same amount of time as if they continued to service replacement updates without preemption, further reducing system time.
% to further decrease the system time.
 %The memoryless service time enables the preemptive queueing system to further decrease the system time in addition to that the preemption zeros the waiting time.}
%this M/M/1/1 system possesses the memoryless service time and preemptive queuing process.}
%This implies that the violation probability is higher in the system with the lower variance of service time.}
%For the M/D/1/1 system, it is found that the corresponding violation probability remains 1 when $w_1<1$. This is because the successfully received updates must have an age exceeding the service time $S=1$ in the M/D/1/1 system.
\textcolor{black}{
%Furthermore, it is found that the effect of the service time distribution on the violation probability is more significant for the case with smaller AoI threshold, i.e., stricter timeliness requirement.
Furthermore, it is found that the violation probabilities of the M/D/1/1 and M/U/1/1 systems are quite close.
%, especially when the AoI threshold is high, i.e., the timeliness requirement is low.
This is because the variances of service times of  M/D/1/1 and M/U/1/1 systems are low.
Hence, one can conclude that, when given the average service time,
%the violation probability of AoI is higher and
the effect of service time distribution is more significant for the system with
%stricter timeliness requirement and
greater variance of service time.}
One can also find that for the same service time distribution, the curve with circle markers is below that with diamond markers. This indicates that, given the total arrival rate, the lower violation probability of a source can be achieved by increasing the corresponding arrival rate, which validates Proposition \ref{pp2}. This is consistent with the fact that if a larger arrival rate is allocated to a source, its timeliness will be better since its updates have a lower probability of being preempted by other sources.

\subsection{Effect of Arrival Rate Allocation}
Next, let us analyse the effect of the arrival rate allocations on the overall timeliness based on the maximal violation probabilities of AoI and PAoI, as shown in Fig. \ref{fig:mvp}. Without loss of generality, a dual-source M/M/1/1 system is considered. By noticing the similarities between  Fig. \ref{fig:mvp}\subref{fig:mvpaoi} and \ref{fig:mvp}\subref{fig:mvppaoi}, we thus take Fig. \ref{fig:mvp}\subref{fig:mvpaoi} as an example and analyse how the arrival rate allocation affects the maximal violation probability of AoI. The obtained results are also applicable for the PAoI case.
%\begin{figure}[!t]
%\centering
%\subfloat[Maximal violation probability of AoI. $w_2=10$ and $w_1=10, 5, 15$ respectively.]{
%        \includegraphics[width=0.48\linewidth]{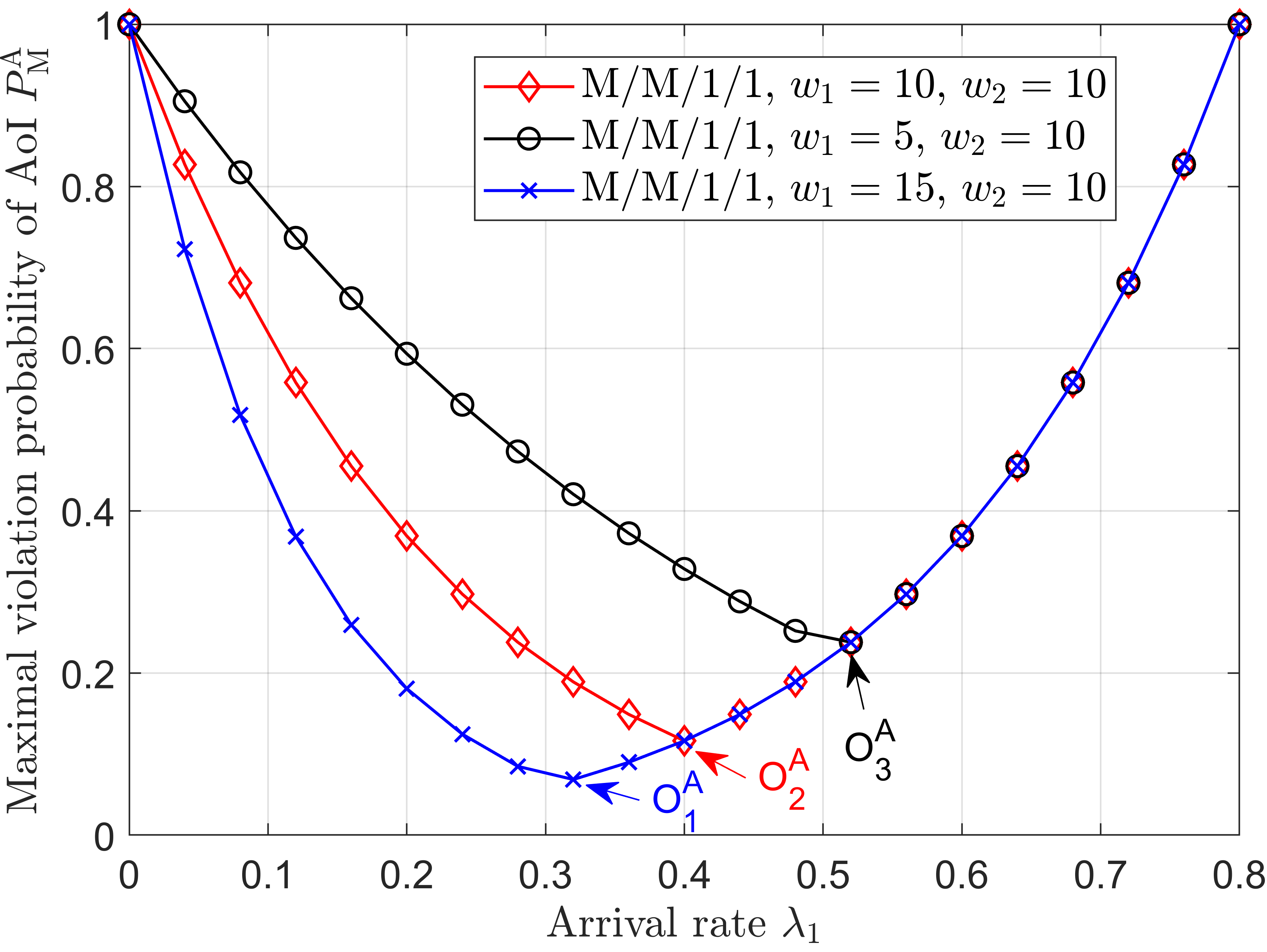}
%        \label{fig:mvpaoi}}
%\subfloat[Maximal violation probability of PAoI. $p_2=10$ and $p_1=10, 5, 15$ respectively.]{
%        \label{fig:mvppaoi}
%        \includegraphics[width=0.48\linewidth]{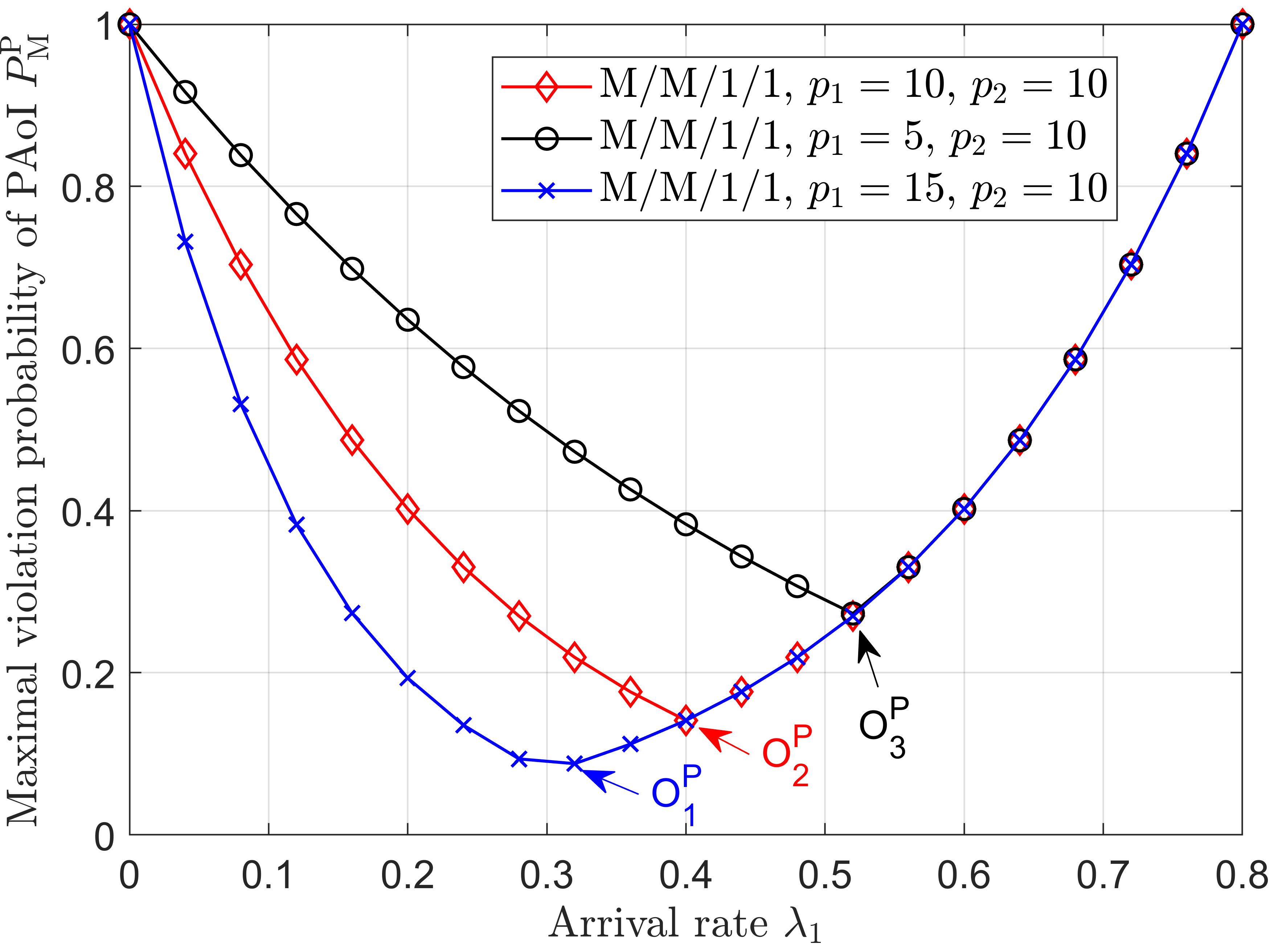}}
%\caption{Maximal violation probabilities of AoI and PAoI as functions of the arrival rate of source 1. $\mu=1$ and $\lambda=0.8$.}
%\label{fig:mvp}
%\end{figure}

\begin{figure}[!t]
  \centering
  %\hspace*{\fill}
  \subfigure[$w_2=10$ and $w_1=10, 5, 15$ respectively.]{
    \label{fig:mvpaoi} %% label for first subfigure
    \includegraphics[width=0.35\textwidth]{}}\hfill\\
  \subfigure[$p_2=10$ and $p_1=10, 5, 15$ respectively.]{
    \label{fig:mvppaoi} %% label for second subfigure
    \includegraphics[width=0.35\textwidth]{}}
    %\hspace*{\fill}
  \caption{Maximal violation probabilities of AoI and PAoI as functions of the arrival rate of source 1. $\mu=1$ and $\lambda=0.8$.}
  \label{fig:mvp} %% label for entire figure
\end{figure}

Fig. \ref{fig:mvp}\subref{fig:mvpaoi} depicts how the maximal violation probabilities of AoI vary with the arrival rate of source $1$, where the blue, red and black curves correspond to the case when setting $w_2=10$ and $w_1= 15, 10, 5$, respectively.
In addition, there are points ${\text O_1^\text A}$, ${\text O_2^\text A}$ and ${\text O_3^\text A}$ representing the optimal arrival rate allocations and the minimums of maximal violation probabilities of AoI for corresponding AoI thresholds. These points are found by a convex optimization algorithm utilizing the Newton method with log-barrier, where the initial arrival rate allocation is $\left( {{\lambda _1},{\lambda _2}} \right) = \left( {0.4,0.4} \right)$ and the step size is chosen by the backtracking line search.
Since $P_i^{\text A}$ monotonically decreases with $\lambda_i$, it is also important to note that, for each curve, the part to the left of the optimal point represents $P_1^{\text{A}}$ while that to the right of the optimal point represents $P_2^{\text{A}}$. Note that when the optimal points are achieved, the violation probabilities for all sources are equal, which verifies Proposition \ref{pp3}.
%Next, it is found that, with the increase of $\lambda_1$, the maximal violation probabilities decrease at first and then increase. This indicates that proper arrival rate allocations can reduce the maximal violation probability.
Moreover, it is shown that the curves are convex, which implies the correctness of Proposition \ref{pp1}. Furthermore, as shown by the optimal points, it can be obtained that the optimal arrival rate allocations reduce the maximal violation probabilities effectively, i.e., improve the overall timeliness of multi-source status update system significantly. Besides, it is found that ${\text O_3^\text A}$ is to the right of ${\text O_2^\text A}$ and ${\text O_1^\text A}$ is to the left of ${\text O_2^\text A}$. This implies that, with the decrease of the AoI threshold of source $1$, the optimal arrival rate of source $1$ increases. Therefore, to achieve the better overall timeliness, one should allocate higher arrival rates for the sources with higher timeliness requirements.

\subsection{Overall Timeliness Gains from the Optimal Arrival Rates}
Finally, Fig. \ref{fig:mmvp} is provided for studying the improvement of the maximal violation probabilities of AoI and PAoI when the corresponding optimal arrival rate allocations are adopted. Without loss of generality, the dual-source M/M/1/1 system is considered. Note that Fig. \ref{fig:mmvp}\subref{fig:ggg} is similar to Fig. \ref{fig:mmvp}\subref{fig:ttt}, hence we take Fig. \ref{fig:mmvp}\subref{fig:ttt} as an example and analyse the improvement of the maximal violation probability of AoI in the following. The obtained results are also applicable for the PAoI case.
%\begin{figure}[!t]
%\centering
%\subfloat[Maximal violation probability of AoI. $\left( {{w_1},{w_2}} \right) =$ (7.5, 7.5), (5, 10), (2, 13), respectively.]{
%        \includegraphics[width=0.48\linewidth]{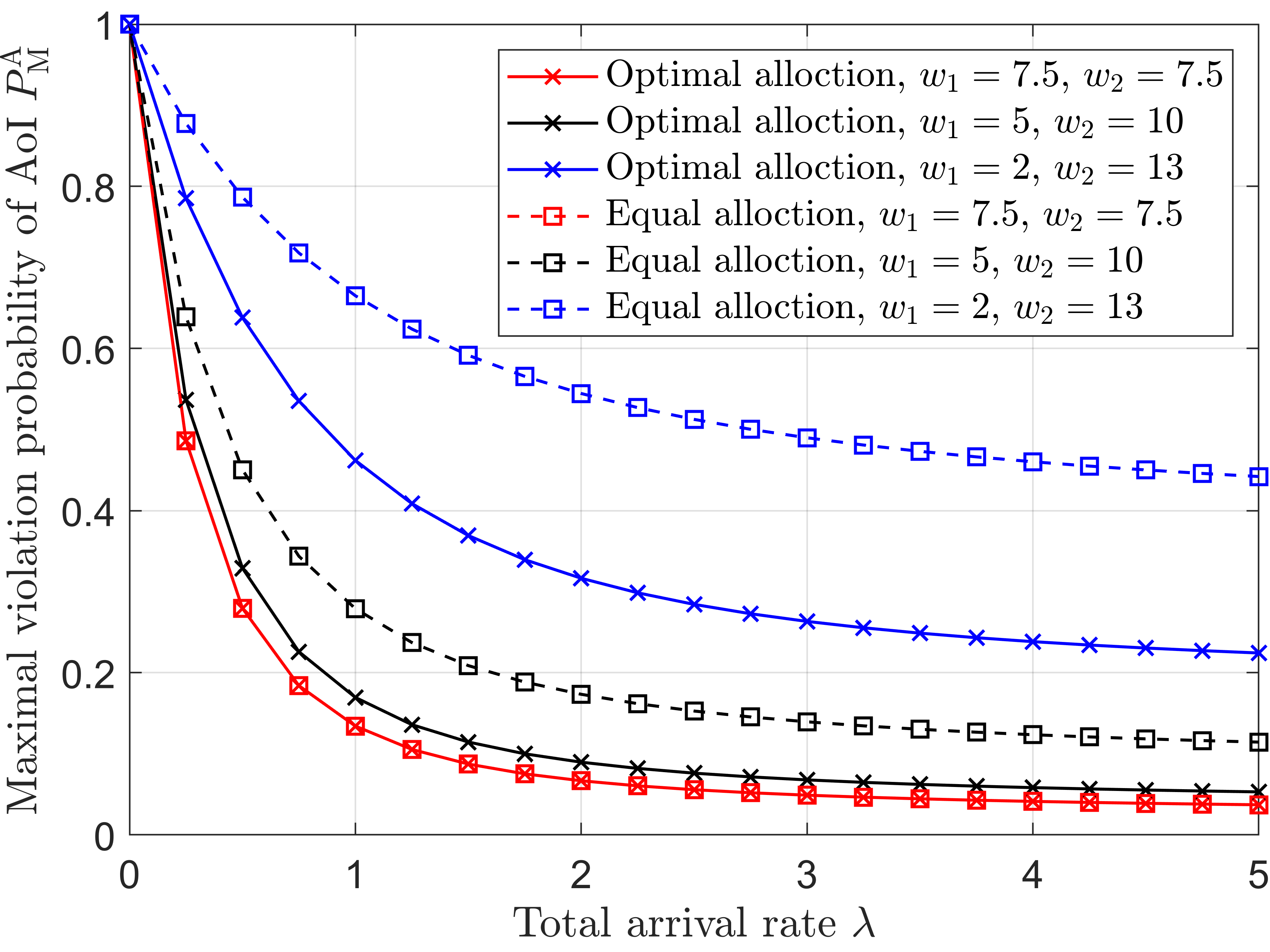}
%        \label{fig:ttt}}
%\subfloat[Maximal violation probability of PAoI. $\left( {{p_1},{p_2}} \right)=$ (7.5, 7.5), (5, 10), (2, 13), respectively.]{
%        \label{fig:ggg}
%        \includegraphics[width=0.48\linewidth]{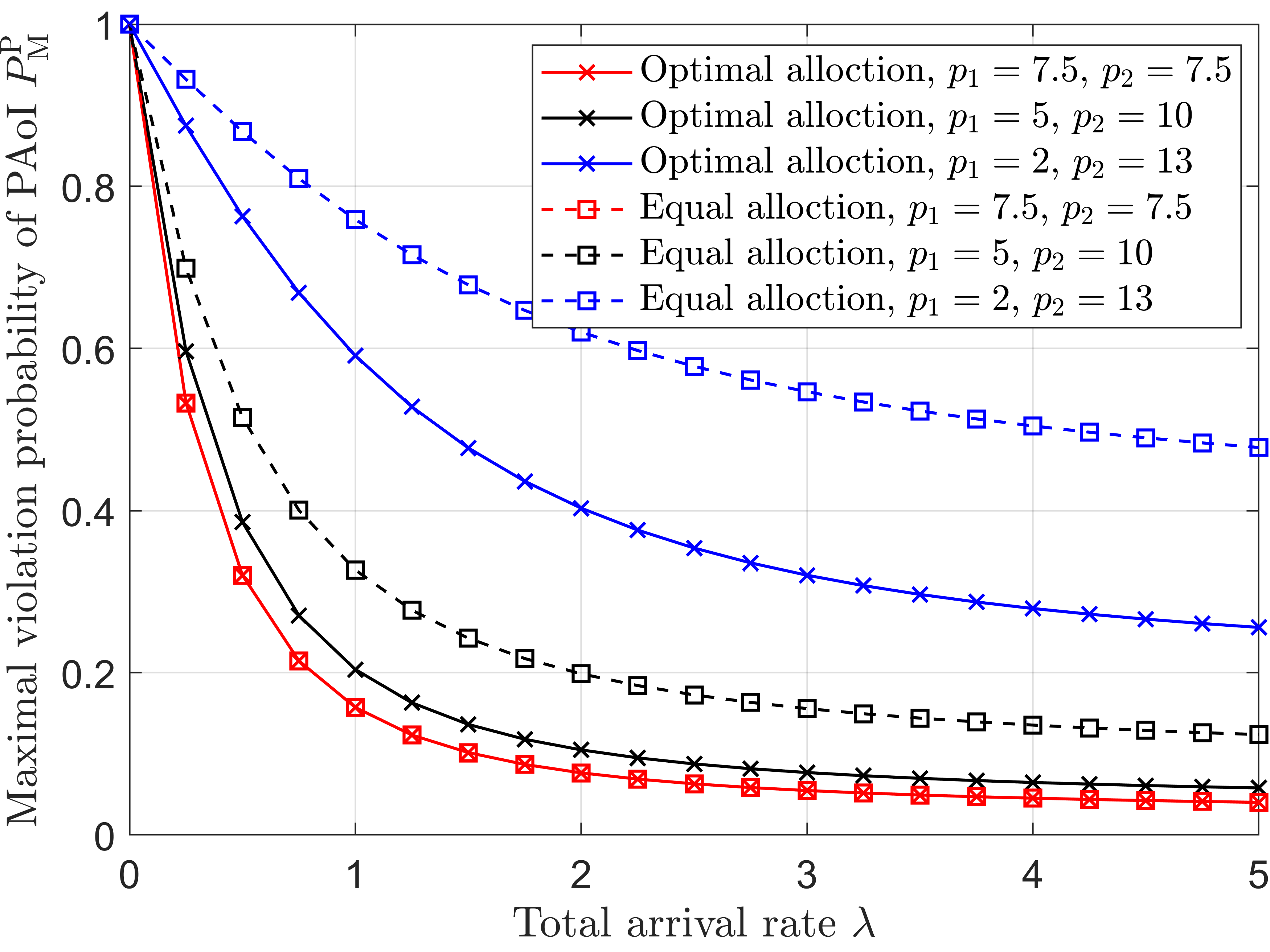}}
%\caption{Maximal violation probabilities of AoI and PAoI as functions of the total arrival rate. $\mu=1$.}
%\label{fig:mmvp}
%\end{figure}
\begin{figure}[!t]
  \centering
  %\hspace*{\fill}
  \subfigure[$\left( {{w_1},{w_2}} \right) =$ (7.5, 7.5), (5, 10), (2, 13), respectively.]{
    \label{fig:ttt} %% label for first subfigure
    \includegraphics[width=0.35\textwidth]{}}\\%\hfill
  \subfigure[$\left( {{p_1},{p_2}} \right)=$ (7.5, 7.5), (5, 10), (2, 13), respectively.]{
    \label{fig:ggg} %% label for second subfigure
    \includegraphics[width=0.35\textwidth]{}}
    %\hspace*{\fill}
  \caption{Maximal violation probabilities  as functions of total arrival rate. $\mu$$=$$1$.}
  \label{fig:mmvp} %% label for entire figure
\end{figure}

Fig. \ref{fig:mmvp}\subref{fig:ttt} compares the optimal arrival rate allocations with the commonly used equal arrival rate allocations, in which the red, black and blue curves correspond to the maximal violation probabilities of AoI when setting $( {{w_1},{w_2}} ) = ( {7.5,7.5} ),\;( {5,10} )$ and $( {2,13} )$, respectively. Also note that the cross markers and the square markers stand for optimal arrival rates and the equal arrival rate allocations, respectively.
First, it is found that the maximal violation probabilities of AoI adopting the optimal arrival rate allocation are lower than that adopting the equal arrival rate allocations. This means that the optimal arrival rate allocation performs much better than the equal arrival rate allocation. Further, one can notice that the red curves with cross markers and square markers coincide, which can be explained by the fact that the AoI thresholds for source 1 and 2 are equal and the equal arrival rate allocation is optimal in this case.
Furthermore, it is shown that the violation probability of AoI monotonously decreases with the total arrival rate, which indicates that the overall timeliness can be further improved by increasing the total arrival rate.
\section{Conclusion}\label{sec con}
We studied the violation probabilities and distributions of AoI and PAoI for the multi-source M/G/1/1 bufferless preemptive system. \textcolor{black}{The general formulas of the violation probabilities and p.d.f.s of AoI and PAoI were derived based on the time-domain analyses on the violation time, which is an insightful approach with obvious physical meaning.} Moreover, according to the obtained general formulas, we derived the corresponding violation probabilities and p.d.f.s in closed forms for the M/M/1/1 system.
\textcolor{black}{It was also proved that the per-source violation probabilities and variance monotonously decrease with the arrival rate. This indicates that the timeliness and its stability can be enhanced by increasing the corresponding arrival rate.}
Furthermore, the maximal violation probability of AoI (or PAoI), i.e., the maximum of all violation probabilities of per-source AoI (or PAoI), was proposed to characterize the overall timeliness. \textcolor{black}{To improve the overall timeliness under the resource constraint of the IoT device, we further optimized the arrival rate allocation for minimizing the maximal violation probability of AoI (or PAoI).} \textcolor{black}{We proved the convexity of the formulated problems with the deliberate mathematical analyses. The optimal arrival rate allocations were found by using standard convex optimization algorithms.} Based on the pigeonhole principle, it was also obtained that the minimum of the maximal violation probability of AoI (or PAoI) is achieved only when the violation probabilities of AoI (or PAoI) of all sources are equal. Numerical results validated the theoretical analyses and show that the optimal arrival rate allocation improves the overall timeliness significantly. \textcolor{black}{In addition, it was shown that the M/M/1/1 system outperforms M/D/1/1 and M/U/1/1 systems in terms of the violation probability of AoI.}

\appendix
\subsection{Proof of Theorem \ref{thm pdfp}}\label{prf pdfp}
To get the violation probability of PAoI, let us derive the cumulative distribution function (c.d.f.) of the PAoI of updates from source $i$, which is denoted by $F_{\Delta_{i}^{\text{P}}}(x)$.
Based on (\ref{p2}), when $x\geq0$, one can get
\begin{align}
{F_{\Delta _i^{\text{P}}}}(x) \!=\! \Pr \!\left\{ {\Delta _{i,k}^{\text{P}} \!\le\! x} \right\}
& \!=\! \Pr \! \left\{ {{Y_{i,k+1}} + {T_{i,k }} \le x} \right\}\nonumber\\
& \!=\!\! \int_0^x \!\!{{\text d}y\!\!\int_0^{x \!-\! y}\!\! \!{{f_{{Y_{i,k+1}},{T_{i,k }}}}\!\left( {y,t} \right)} } {\text d}t,\label{p10}
\end{align}
where ${{f_{{Y_{i,k+1}},{T_{i,k}}}}( {y,t} )}$ is the joint p.d.f. of inter-departure time $Y_{i,k+1}$ and system time ${T_{i,k }}$.

Since $Y_{i,k+1}$ is independent with $T_{i,k}$ and the p.d.f. of $T_{i,k}$ is given by \cite{8406928}
\begin{align}\label{p8}
f_{T}(x)= \frac{e^{-\lambda x}f_{S}(x)}{L_{S}(\lambda )},
\end{align}
 one can get
\begin{align}\label{p11}
{{f_{{Y_{i,k+1}},{T_{i,k}}}}( {y,t} )}= f_{Y_i}(y)f_{T}(t)= f_{Y_i}(y)\frac{e^{-\lambda t}f_{S}(t)}{L_{S}(\lambda )}.
\end{align}
Plunging (\ref{p8}) and (\ref{p11}) into (\ref{p10}), it has that
\begin{align}
{F_{\Delta _i^{\text{P}}}}(x) = \frac{1}{{{L_S}( \lambda  )}}\int_0^x {{f_{{Y_i}}}( y ){\text d}y\int_0^{x - y} {{e^{ - \lambda t}}{f_S}( t ){\text d}t} }.\label{p12}
\end{align}
This directly indicates that
\begin{align}
P_{i}^{\text{P}}(p_i) =& \Pr\{\Delta_{i}^{\text{P}}>p_i\} =\! 1 \!- F_{\Delta_{i}^{\text{P}}}(x)|_{x=p_i}\nonumber\\
&=1 \!-\frac{1}{{{L_S}( \lambda  )}}\!\int_0^{p_i}\!\!\! {{f_{{Y_i}}}( y ){\text d}y\!\!\int_0^{p_i - y} \!\!{{e^{ - \lambda t}}{f_S}( t ){\text d}t} },
\end{align}
for given PAoI threshold $p_i>0$.
Additionally, noting that ${f_{\Delta _i^{\text{P}}}}(x)=\frac{{{\text{d}}}}{{{\text{d}}x}}{F_{\Delta _i^{\text{P}}}(x)}$, it has that
\begin{align}
{f_{\Delta _i^{\text{P}}}}(x)
=\frac{1}{{{L_S}( \lambda  )}}\int_0^x {{e^{ - \lambda ( {x - y} )}}{f_{{Y_i}}}( y ){f_S}( {x - y} ){\text d}y}.
\end{align}

One should note that age is non-negative, hence, it has ${f_{\Delta _i^{\text{P}}}}(x)=0$ for $x<0$.
This ends the proof.
\subsection{Proof of Theorem \ref{cor p2}}\label{prf p2}
%Note that
%Let us denote the c.d.f. of AoI as ${F_{{\Delta _i}}}\left( x \right)$. For any given $w_i>0$, because ${F_{{\Delta _i}}}(w_i) = \Pr \left\{ {{\Delta _i} \le w_i} \right\} = 1 - \Pr \left\{ {{\Delta _i} > w_i} \right\} = 1 - P_i^{\text{A}}(w_i)$ and ${f_{{\Delta _i}}}\left( w_i \right) = \frac{\text{d}}{{{\text{d}}w_i}}{F_{{\Delta _i}}}(w_i)$, one can get
%\begin{align} \label{t23}
%{f_{{\Delta _i}}}\left( w_i \right) =  - \frac{\text{d}}{{{\text{d}}w_i}}P_i^{\text{A}}(w_i).
%\end{align}
%Thus, to calculate ${f_{{\Delta _i}}}\left( w_i \right)$, we firstly need to derive $P_i^{\text{A}}(w_i)$ roughly, as shown in the following.
Recalling (\ref{p17}), (\ref{p21}) and (\ref{p11}), one can derive the violation probability of AoI
\begin{align}\label{p23}
&P_i^{\text{A}}(w_i)={\mathbb{E}\left[ {{J_{i,k}}} \right]}/{{{\mathbb{E}\left[ {{Y_i}} \right]}}}\nonumber\\
=& \frac{1}{{{\mathbb{E}\left[ {{Y_i}} \right]}}}\!\left( {\!\int_0^{ + \infty }\!\!\!\! {r{f_{{Y_i}}}\left( r \right)\!{\text d}r \!\!\int_{w_i}^{ + \infty } \!\!\!\!{{f_T}\left( t \right)\!{\text d}t}} } \right. +\!\! \int_0^{w_i}\!\!\!\! {{f_{{Y_i}}}\left( r \right)\!{\text d}r} \!\!\int_{{w_i} - r}^{w_i}\!\!\!\! {\left( {r}\right.} \nonumber\\
&{\left.{r+ t - {w_i}} \right)\!{f_T}\left( t \right)\!{\text d}t}\!\left. { +\!\! \int_{w_i}^{ + \infty }\!\!\!\! {{f_{{Y_i}}}\!\!\left( r \right)\!{\text d}r} \!\!\int_0^{w_i} \!\!\!\!{\left( {r + t - {w_i}} \right)\!{f_T}\left( t \right)\!{\text d}t}\! } \right)\nonumber\\
\mathop  = \limits^{\left( {\text{a}} \right)}& \frac{1}{{{\mathbb{E}\left[ {{Y_i}} \right]}}}\left( {{{{\mathbb{E}\left[ {{Y_i}} \right]}}}\left( {1 - {F_T}\left( {w_i} \right)} \right)} \right.+ {F_T}\left( {w_i} \right)\int_0^{w_i} {r{f_{{Y_i}}}\left( r \right){\text d}r}  \nonumber\\
&- \!\int_0^{w_i}\!\!\! {{f_{{Y_i}}}\left( r \right)\!{\text d}r}\! \int_{{w_i} - r}^{w_i}\!\!\! {{F_T}\left( t \right)\!{\text d}t} + \left. {{F_T}\left( {w_i} \right)\!\int_{w_i}^{ + \infty } \!\!{r{f_{{Y_i}}}\left( r \right)\!{\text d}r}}\right.  \nonumber\\
&\left.{- \int_{w_i}^{ + \infty } {{f_{{Y_i}}}\left( r \right){\text d}r\int_0^{w_i} {{F_T}\left( t \right){\text d}t}} } \right)\nonumber\\
=& 1 - \frac{1}{{{\mathbb{E}\left[ {{Y_i}} \right]}}}\left( {\int_0^{w_i} {{f_{{Y_i}}}\left( r \right){\text d}r} \int_{{w_i} - r}^{w_i} {{F_T}\left( t \right){\text d}t} } \right.\nonumber\\
&\left. { + \int_{w_i}^{ + \infty } {{f_{{Y_i}}}\left( r \right){\text d}r} \int_0^{w_i} {{F_T}\left( t \right){\text d}t} } \right),
\end{align}
where ${{F_T}( t )}$ is the c.d.f. of system time $T$, and the equality (a) holds following from
$\int_{w_i - r}^{w_i} {( {r + t - w_i} ){f_T}( t ){\text d}t}  = {F_T}( w_i )r - \int_{w_i - r}^{w_i} {{F_T}( t ){\text d}t}$
and $\int_0^{w_i} {( {r + t - w_i} ){f_T}( t ){\text d}t}  = {F_T}( w_i )r - \int_0^{w_i} {{F_T}( t ){\text d}t} $.
To obtain a more explicit expression of $P_i^{\text{A}}(w_i)$, we turn to derive the p.d.f. of AoI ${f_{{\Delta _i}}}( w_i )$.

Note that ${f_{{\Delta _i}}}( w_i ) = \frac{\text{d}}{{{\text{d}}w_i}}{F_{{\Delta _i}}}(w_i)=- \frac{\text{d}}{{{\text{d}}w_i}}{P_i^{\text{A}}({w_i})}$. Based on  (\ref{p23}), one can get
\begin{align}\label{p28}
&{f_{{\Delta _i}}}\left( {w_i} \right)=  - \frac{\text{d}}{{{\text{d}}{w_i}}}P_i^{\text{A}}({w_i})\nonumber\\
=& \frac{1}{{\mathbb{E}\left[ {{Y_i}} \right]}}\left( {{f_{{Y_i}}}\left( {w_i} \right)\int_0^{w_i} {{F_T}\left( t \right){\text d}t} } \right.\nonumber\\
& + \int_0^{w_i} {{f_{{Y_i}}}\left( r \right)\left( {{F_T}\left( {w_i} \right) - {F_T}\left( {{w_i} - r} \right)} \right){\text d}r} \nonumber\\
&\left. { - {f_{{Y_i}}}\left( {w_i} \right)\int_0^{w_i} {{F_T}\left( t \right){\text d}t}  + {F_T}\left( {w_i} \right)\int_{w_i}^{ + \infty } {{f_{{Y_i}}}\left( r \right){\text d}r} } \right)\nonumber\\
=& \frac{1}{{\mathbb{E}\left[ {{Y_i}} \right]}}\!\left( {\!{F_T}\left( {w_i} \right)\!\int_0^{ + \infty }\!\!\! {{f_{{Y_i}}}\left( r \right)\!{\text d}r }{- \!\int_0^{w_i} \!\!\!{{f_{{Y_i}}}\left( r \right){F_T}\left( {{w_i} - r} \right)\!{\text d}r} } \!} \right)\nonumber\\
=& \frac{1}{{\mathbb{E}\left[ {{Y_i}} \right]}}\left( {\!{F_T}\left( {w_i} \right) - \int_0^{w_i} {{f_{{Y_i}}}\left( r \right){F_T}\left( {{w_i} - r} \right){\text d}r} } \right).
\end{align}
Let us substitute $w_i$ and $r$ with $x$ and $y$, respectively. \textcolor{black}{Note that ${F_T}( x-y ) = \int_0^{{x-y}} \!\!{{f_T}( t ){\text d}t} =\frac{1}{{{L_S}( \lambda  )}}\int_0^{{x-y}}\!\! {{e^{ - \lambda t}}{f_S}( t ){\text d}t}$, which holds following from (\ref{p8}), and $\mathbb{E}[{Y_i}] =  - { {\frac{{\text{d}}}{{\text{d}s}}}{L_{{Y_i}}( s )} |_{s = 0}} = \frac{1}{{{\lambda _i}{L_S}( \lambda  )}}$.} The p.d.f. of AoI for $x\geq0$ can be obtained. In addition, for $x<0$, we have ${f_{{\Delta _i}}}( x )=0$. Hence, one can get
\begin{align}\label{t28}
{f_{{\Delta _i}}}( x )=\!
\begin{cases}
{\lambda _i}( {\int_0^{{x}}\! {{e^{ - \lambda t}}{f_S}\!\left( t \right){\text d}t}  }\\
{- \int_0^x \!{{f_{{Y_i}}}\!\left( y \right){\text d}y} \int_0^{{{x - y}}}\! {{e^{ - \lambda t}}{f_S}\!\left( t \right){\text d}t} } ), \!\!\!\!\!& x\! \geq \!0,\\
0,\!\!\! \!\!& {x \!<\! 0.}
\end{cases}
\end{align}

Since $f_{Y_i}(x)$ is related to $f_S(x)$ and $\lambda_i$, one can express ${f_{{\Delta _i}}}( x )$ more explicitly with some manipulations. In the following, let us simplify (\ref{t28}) by using the LT of ${f_{{\Delta _i}}}( x )$, which is denoted by $L_{{{\Delta _i}}}(s)$.
First, let us define
\begin{align}
u\left( x \right) =
\begin{cases}
{1},&{x \geq 0,}\\
{0},&{x < 0.}
\end{cases}
\end{align}
Accordingly, one can get $f_{S}(x)=f_{S}(x)u(x)$ and $f_{Y_i}(x)=f_{Y_i}(x)u(x)$.
Therefore, (\ref{t28}) can be rewritten as
\begin{align}\label{t32}
&{f_{{\Delta _i}}}\left( x \right)\nonumber\\
=& {\lambda _i}\left( {u\left( x \right)\int_0^{{x}} {{e^{ - \lambda t}}{f_S}\left( t \right){\text d}t} } \right.\nonumber\\
&\left. { - u\left( x \right)\int_0^x {{f_{{Y_i}}}\left( y \right)} u\left( {x - y} \right)\int_0^{{{x - y}}} {{e^{ - \lambda t}}{f_S}\left( t \right){\text d}t} {\text d}y} \right)\nonumber\\
=& {\lambda _i}\left( {u\left( x \right)\int_0^{{x}} {{e^{ - \lambda t}}{f_S}\left( t \right){\text d}t} } \right.\nonumber\\
&\left. { - u\left( x \right)\int_0^{ + \infty } {{f_{{Y_i}}}\left( y \right)} u\left( {x - y} \right)\int_0^{{{x - y}}} {{e^{ - \lambda t}}{f_S}\left( t \right){\text d}t} {\text d}y} \right)\nonumber\\
=& {\lambda _i}\!\left( {\!\!u\!\left( x \right)\!\!\int_0^{{x}}\!\!\!\! {{e^{ - \!\lambda t}}\!{f_S}\!\left( t \right)\!{\text d}t }{- {f_{{Y_i}}}\!\left( x \right)\! * \!\left( {\!\!u\!\left( x \right)\!\!\int_0^{{x}}\!\!\!\! {{e^{ - \!\lambda t}}\!{f_S}\!\left( t \right)\!{\text d}t} \!} \right)} \!\!} \right)\!\!,
\end{align}
where the operator $ * $ represents the convolution, i.e., for two functions $h_1(x)$ and $h_2(x)$, ${h_1}( x )*{h_2}( x ) = \int_{ - \infty }^{ + \infty } {{h_1}( y ){h_2}( {x - y} ){\text d}y} $.
Based on (\ref{ly}), (\ref{t32}) and the convolution theorem of LT, one can get
\begin{align} \label{pppppp}
{L_{{\Delta _i}}}\left( s \right) &= {\lambda _i}\left( {\frac{{{L_S}\left( {\lambda  + s} \right)}}{s} - {L_{{Y_i}}}\left( s \right)\frac{{{L_S}\left( {\lambda  + s} \right)}}{s}} \right)\nonumber\\
%& = {\lambda _i}\frac{{{L_S}\left( {\lambda  + s} \right)}}{s}\left( {1 - \frac{{{\lambda _i}{L_S}\left( {\lambda  + s} \right)}}{{{\lambda _i}{L_S}\left( {\lambda  + s} \right) + s}}} \right)\nonumber\\
& = \frac{{{\lambda _i}{L_S}\left( {\lambda  + s} \right)}}{{{\lambda _i}{L_S}\left( {\lambda  + s} \right) + s}}.
\end{align}
Hence, the p.d.f. of AoI can be explicitly given by
\begin{align}\label{uuu}
{f_{{\Delta _i}}}\left( x \right)= L^{-1}\left[\frac{{{\lambda _i}{L_S}\left( {\lambda  + s} \right)}}{{{\lambda _i}{L_S}\left( {\lambda  + s} \right) + s}}\right].
\end{align}

Finally, for the given AoI threshold $w_i>0$, one can get the violation probability of AoI
\begin{align}
P_{i}^{\text{A}}(w_i)&=\Pr\{\Delta_{i}^{\text{A}}>w_i\} = 1 - \int_0^{w_i} {{f_{{\Delta _i}}}\left( x \right){\text d}x},
\end{align}
where ${f_{{\Delta _i}}(x)}$ is given by (\ref{uuu}).
This ends the proof.
\subsection{Proof of Lemma \ref{lem 1}}\label{prf lemma1}
First, based on (\ref{p30}), one can get
\begin{align}
{L_{{Y_i}}}\left( s \right)&= \frac{{{\lambda _i}{L_S}\left( {\lambda  + s} \right)}}{{{\lambda _i}{L_S}\left( {\lambda  + s} \right) + s}}= \frac{{{\lambda _i}\mu }}{{{s^2} + \left( {\lambda  + \mu } \right)s + {\lambda _i}\mu }}.
\end{align}
To use the inverse LT to get $f_{Y_i}(x)$, we rewrite ${L_{{Y_i}}}\left( s \right) $ as
\begin{align}\label{p33}
{L_{{Y_i}}}\left( s \right) = \frac{A_i}{{s - a_i}} + \frac{B_i}{{s - b_i}},
\end{align}
where $a_i$ and $b_i$ are the solutions of the quadratic equation of $s$, i.e., ${{s^2} + \left( {\lambda  + \mu } \right)s + {\lambda _i}\mu }=0$. Thus, $A_i$ and $B_i$ satisfy
\begin{align}\label{p34}
\begin{cases}
A_i + B_i = 0,\\
A_ib_i + B_ia_i =  - {\lambda _i}\mu.
\end{cases}
\end{align}
By solving (\ref{p34}), we have $A_i = {{{\lambda _i}\mu }}/({{a_i - b_i}})$ and $B_i =  - {{{\lambda _i}\mu }}/({{a_i - b_i}})$.
Therefore, one can get
\begin{align}
{L_{{Y_i}}}\left( s \right) = \frac{{{\lambda _i}\mu }}{{a_i - b_i}}\left( {\frac{1}{{s - a_i}} - \frac{1}{{s - b_i}}} \right),
\end{align}
which can be used to directly obtain $f_{Y_i}(x)$ via inverse LT.
This ends the proof.
\subsection{Proof of Corollary \ref{cor p3}}\label{prf corollaty3}
To obtain the violation probability of PAoI, we first derive the p.d.f. of PAoI.
Based on Theorem \ref{thm pdfp}, Lemma \ref{lem 1} and (\ref{p29}), for $x\geq0$, one can get
\begin{align}
&{f_{\Delta _i^{\text{P}}}}(x)=\frac{1}{{{L_S}\left( \lambda  \right)}}\int_0^x {{e^{ - \lambda \left( {x - y} \right)}}{f_{{Y_i}}}\left( y \right){f_S}\left( {x - y} \right){\text d}y} \nonumber\\
&= \frac{{\lambda  + \mu }}{\mu }\int_0^x {\frac{{{\lambda _i}\mu }}{{a_i - b_i}}\left( {{e^{a_iy}} - {e^{b_iy}}} \right)\mu {e^{ - \left( {\lambda  + \mu } \right)\left( {x - y} \right)}}{\text d}y}\nonumber\\
&= \frac{{{\lambda _i}\mu }\left( {\lambda  + \mu } \right)}{{a_i - b_i}}{e^{ - \left( {\lambda  + \mu } \right)x}}\int_0^x {{e^{\left( {a_i + \lambda  + \mu } \right)y}} - {e^{\left( {b_i + \lambda  + \mu } \right)y}}{\text d}y}\nonumber\\
&\mathop  = \limits^{\left( {\text{a}} \right)} \frac{{{\lambda _i}\mu }\left( {\lambda  + \mu } \right)}{{a_i - b_i}}{e^{ - \left( {\lambda  + \mu } \right)x}}\int_0^x {{e^{ - b_iy}} - {e^{ - a_iy}}{\text d}y}\nonumber\\
&= \frac{{{\lambda _i}\mu }\left( {\lambda  + \mu } \right)}{{a_i - b_i}}{e^{ - \left( {\lambda  + \mu } \right)x}}( {\frac{{b_i{e^{-a_ix}} - a_i{e^{-b_ix}}}}{{a_ib_i}} + \frac{{a_i - b_i}}{{a_ib_i}}} )\nonumber\\
&\mathop  = \limits^{\left( {\text{b}} \right)} \left( {\lambda  + \mu } \right)( {{e^{ - \left( {\lambda  + \mu } \right)x}} + \frac{{b_i{e^{b_ix}} - a_i{e^{a_ix}}}}{{a_i - b_i}}} ),
\end{align}
in which the equalities (a) and (b) hold  following from (\ref{p37}). Additionally, for $x<0$, we have ${f_{\Delta _i^{\text{P}}}}(x)=0$.

Therefore, the violation probability $P_i^{\text P}(p_i)$ can be given by
\begin{align}
P_i^{\text P}(p_i) = \Pr \left\{ {\Delta _i^{\text P} > p_i} \right\}& = 1 - \int_0^{p_i} {{f_{\Delta _i^{\text{P}}}}(x){\text d}x} \nonumber\\
={e^{ - \left( {\lambda  + \mu } \right)p_i}}& + \frac{{\lambda  + \mu }}{{a_i - b_i}}\left( {{e^{a_ip_i}} - {e^{b_ip_i}}} \right).
\end{align}
This ends the proof.
\subsection{Proof of Proposition \ref{pp1}}\label{prf p3}
First, let us prove that $P_i^{\text P}(\lambda_i;p_i)$ is a convex function w.r.t. $\boldsymbol {\lambda}$. Recall that $a_i$ and $b_i$ are the solutions of ${{s^2} + ( {\lambda  + \mu } )s + {\lambda _i}\mu }=0$, one can obtain $a_i = - \frac{1}{2}( {\lambda  + \mu } ) + \frac{1}{2}{( {{{( {\lambda  + \mu } )}^2} - 4{\lambda _i}\mu })}^{\frac{1}{2}} $ and $b_i = - \frac{1}{2}( {\lambda  + \mu } ) - \frac{1}{2}{( {{{( {\lambda  + \mu } )}^2} - 4{\lambda _i}\mu })}^{\frac{1}{2}} $.
We denote $D_i=a_i-b_i= {( {{{( {\lambda  + \mu } )}^2} - 4{\lambda _i}\mu })}^{\frac{1}{2}}>0$, $D_i'=\frac{{{\text{d}}{D_i}}}{{{\text{d}}{\lambda _i}}}$,  $a_i' = \frac{{{\text{d}}a_i}}{{{\text{d}}{\lambda _i}}}$ and $b_i' = \frac{{{\text{d}}b_i}}{{{\text{d}}{\lambda _i}}}$. To simplify the derivation, we give the following results,
\begin{align}
a_i'\! &=\! \frac{\text{d}}{{{\text{d}}{\lambda _i}}}\!( {- \frac{1}{2}( {\lambda \! + \!\mu} ) + \frac{1}{2} ({{{( {\lambda \! + \!\mu } )}^2}\! - \! 4{\lambda _i}\mu })^{1/2}} )\! = \! - \frac{\mu }{D_i},\label{p55}\\
b_i'\!&=\! \frac{{\text{d}}}{{{\text{d}}{\lambda _i}}}\!( {- \frac{1}{2}( {\lambda  \!+ \!\mu } ) - \frac{1}{2} ({{{( {\lambda \! + \!\mu } )}^2}\! - \! 4{\lambda _i}\mu })^{1/2}} )\! = \!  \frac{\mu }{D_i},\label{p56}\\
D_i'\!&=a_i'-b_i'=- \frac{2\mu }{D_i}.\label{p57}
\end{align}
Therefore, based  on Corollary \ref{thm pdfp} as well as (\ref{p55}), (\ref{p56}) and (\ref{p57}), the first and the second partial derivatives are given by
\begin{align}\label{53}
&\frac{{\partial P_i^{\text P}(\lambda_i;p_i)}}{{\partial {\lambda _i}}}= \frac{\partial }{{\partial {\lambda _i}}}( {{e^{ - ( {\lambda  + \mu } ){p_i}}} + \frac{{\lambda  + \mu }}{{a_i - b_i}}( {{e^{a_i{p_i}}} - {e^{b_i{p_i}}}} )} )\nonumber\\
&= \frac{\lambda  + \mu }{{{{( {a_i - b_i} )}^2}}} ({{( {a_i'{p_i}{e^{a_i{p_i}}} - b_i'{p_i}{e^{b_i{p_i}}}} )( {a_i - b_i} )}}\nonumber\\
&{- ( {a_i' - b_i'} )( {{e^{a_i{p_i}}} - {e^{b_i{p_i}}}} )})\nonumber\\
&=  - \frac{ {\lambda \! +\! \mu }}{{{D_i^2}}}({{D_i( {\frac{\mu }{D_i}{p_i}{e^{b_i{p_i}}} + \!\frac{\mu }{D_i}{p_i}{e^{a_i{p_i}}}} ) }{+  \frac{{2\mu }}{D_i}( {{e^{b_i{p_i}}} - {e^{a_i{p_i}}}} )}})\nonumber\\
&=  - \frac{{( {\lambda  + \mu } )\mu }}{{{D_i^3}}}( {( {D_i{p_i} + 2} ){e^{b_i{p_i}}} + ( {D_i{p_i} - 2} ){e^{a_i{p_i}}}} ),\\
&\frac{{\partial P_i^{\text P}(\lambda_i;p_i)}}{{\partial {\lambda _j}}} = 0,\label{54}\\
&\frac{{{\partial ^2}P_i^{\text{P}}(\lambda_i;p_i)}}{{\partial {\lambda _i}^2}}
=  - \frac{( {\lambda  + \mu } )\mu}{{D_i^6}}({( {( {D_i'{p_i} + b_i'{p_i}( {D_i{p_i} + 2} )} ){e^{b_i{p_i}}}} }\nonumber\\
&+{{( {D_i'{p_i} + a_i'{p_i}( {D_i{p_i} - 2} )} ){e^{a_i{p_i}}}} ){D_i^3}} \nonumber\\
&{- 3D_i'{D_i^2}( {( {D_i{p_i} + 2} ){e^{b_i{p_i}}} + ( {D_i{p_i} - 2} ){e^{a_i{p_i}}}} )})\nonumber\\
&= \frac{{( {\lambda  + \mu } ){\mu ^2}}}{{{D_i^5}}}( {( {{D_i^2}{p_i}^2 - 6D_i{p_i} + 12} ){e^{a_i{p_i}}}}\nonumber\\
&{- ( {{D_i^2}{p_i}^2 + 6D_i{p_i} + 12} ){e^{b_i{p_i}}}} )\nonumber\\
&= \frac{{( {\lambda  + \mu } ){\mu ^2}}}{{{D_i^5}}}{e^{b_i{p_i}}}( {( {{D_i^2}{p_i}^2 - 6D_i{p_i} + 12} ){e^{D_i{p_i}}} }\nonumber\\
&{- ( {{D_i^2}{p_i}^2 + 6D_i{p_i} + 12} )} ),\label{p60}\\
&\frac{{{\partial ^2}P_i^{\text P}(\lambda_i;p_i)}}{{\partial {\lambda _j}^2}}=\frac{{{\partial ^2}P_i^{\text P}(\lambda_i;p_i)}}{{\partial {\lambda _i}{\lambda _j}}} = \frac{{{\partial ^2}P_i^{\text P}(\lambda_i;p_i)}}{{\partial {\lambda _j}{\lambda _i}}}=0,\label{p61}
\end{align}
for all $ j\neq i$.

To prove that $P_i^{\text P}(\lambda_i;p_i)$ is a convex function w.r.t. $\boldsymbol {\lambda}$, we need to prove the corresponding Hessian Matrix is semi-positive definite. Based on (\ref{p60}) and (\ref{p61}), it can be found that we only need to prove that $\frac{{{\partial ^2}P_i^{\text P}(\lambda_i;p_i)}}{{\partial {\lambda _i}^2}}$
 is positive for $p_i>0$. To this end, let us define
\begin{align}
g( x ) := ( {{x^2} - 6x + 12} ){e^x} - ( {{x^2} + 6x + 12} ).\nonumber
\end{align}
One can obtain the first, the second and the third derivatives
\begin{align}
g'( x ) &= ( {{x^2} - 4x + 6} ){e^x} - ( {2x + 6} ),\nonumber\\
g''( x ) &= ( {{x^2} - 2x + 2} ){e^x} - 2,\nonumber\\
g'''( x )&= {x^2}{e^x} \ge 0, {\text{ for }} x\geq0.\nonumber
\end{align}
Since $g( 0 ) = 0$, $g'( 0 ) = 0$ and $g''( 0 ) = 0$, one can get $g''( x )>0$, $g'( x )>0$ and $g( x )>0$, for $x>0$.
Accordingly, based on (\ref{p60}) and $p_i>0$, one can get
\begin{align}
\frac{{{\partial ^2}P_i^{\text P}(\lambda_i;p_i)}}{{\partial {\lambda _i}^2}}
=\frac{{( {\lambda  + \mu } ){\mu ^2}}}{{{D_i^5}}}{e^{b_i{p_i}}}g( {D_i{p_i}} )>0.
\end{align}
Thus, $P_i^{\text P}(\lambda_i;p_i)$ is a convex function w.r.t. $\boldsymbol {\lambda}$.

Because of the convexity-preserving property of maximum operator, one can get $P_{\text M}^{\text P}(\boldsymbol {\lambda};\boldsymbol {p})$ is also a convex function w.r.t. $\boldsymbol {\lambda}$. Furthermore, noting that the feasible set of ${\mathcal{P}_1 }$ is convex, it can be concluded that $\mathcal{P}_1$ is a convex problem.

Finally, let us prove that $P_i^{\text{P}}(\lambda_i;p_i)$ monotonically decreases with $\lambda_i$, regardless of  $\lambda_j$, for all $j\neq i$.
To obtain the monotonicity, it is only  needed to prove $\frac{{\partial P_i^{\text P}(\lambda_i;p_i)}}{{\partial {\lambda _i}}}=- \frac{{( {\lambda  + \mu } )\mu }}{{{D_i^3}}}( {( {D_i{p_i} + 2} ){e^{b_i{p_i}}} + ( {D_i{p_i} - 2} ){e^{a_i{p_i}}}} )= - \frac{{( {\lambda  + \mu } )\mu }}{{D_i^3}}{e^{{b_i}{p_i}}}( {( {{D_i}{p_i} + 2} ) + ( {{D_i}{p_i} - 2} ){e^{{D_i}{p_i}}}} )>0$. Let us define a function $h(x):=(x-2)e^x+x+2$, whose first and second  derivatives are $h'(x)=(x-1)e^x+1$ and $h''(x)=xe^x\geq0$, for $x\geq0$. One can get $h'(0)=0$ and $h(0)=0$, which imply that $h'(x)>0$ and $h(x)>0$ when $x>0$. Hence, it can be obtained that $\frac{{\partial P_i^{\text P}(\lambda_i;p_i)}}{{\partial {\lambda _i}}}= - \frac{{( {\lambda  + \mu } )\mu }}{{D_i^3}}{e^{{b_i}{p_i}}}h({ D_ip_i})<0$, i.e., $P_i^{\text{P}}(\lambda_i;p_i)$ monotonically decreases with $\lambda_i$. Noting that $\frac{{\partial P_i^{\text P}(\lambda_i;p_i)}}{{\partial {\lambda _j}}} =0$, one can directly obtain that $P_i^{\text P}(\lambda_i;p_i)$ is uncorrelated with  $\lambda_j$, for all $j\neq i$.
This ends the proof.
\subsection{Proof of Proposition \ref{pp2}}\label{prf p4}
The proof of Proposition \ref{pp2} is similar to that of Proposition \ref{pp1}, as shown in the following.
First, let us prove that $P_i^{\text A}(\lambda_i;w_i)$ is a convex function w.r.t. $\boldsymbol {\lambda}$.
Based  on Corollary \ref{cor p5} as well as (\ref{p55}), (\ref{p56}) and (\ref{p57}), the first and the second partial derivatives of $P_i^{\text A}$ w.r.t. $(\lambda_1,\lambda_2,\ldots,\lambda_N )$ are given by
\begin{align}
&\frac{{\partial P_i^{\text A}(\lambda_i;w_i)}}{{\partial {\lambda _i}}}= \frac{\partial }{{\partial {\lambda _i}}}( {\frac{{a_i{e^{b_i{w_i}}} - b_i{e^{a_i{w_i}}}}}{{a_i - b_i}}} )\nonumber\\
&= \frac{1}{{{{( {a - b} )}^2}}}({{( {a_i'{e^{b_i{w_i}}} + a_ib_i'{w_i}{e^{b_i{w_i}}} - b_i'{e^{a_i{w_i}}}}}} \nonumber\\
&{{{- b_ia_i'{p_i}{e^{a_i{w_i}}}} )( {a_i - b_i} ) - ( {a_i' - b_i'} )( {a{e^{b{w_i}}} - b{e^{a{w_i}}}} )}})\nonumber\\
&= \frac{1}{{{D_i^2}}}({{( { - \frac{\mu }{D_i}{e^{b_i{w_i}}} + a_i\frac{\mu }{D_i}{w_i}{e^{b_i{w_i}}} - \frac{\mu }{D_i}{e^{a_i{w_i}}}}}}\nonumber\\
&{{{+ b_i\frac{\mu }{D_i}{w_i}{e^{a_i{w_i}}}} )D_i + \frac{{2\mu }}{D_i}( {a_i{e^{b_i{w_i}}} - b_i{e^{a_i{w_i}}}} )}})\nonumber\\
&=\! \frac{\mu}{{D_i^3}}\! \left({\!{\left( {a_iD_i{w_i} \!- \!D_i \!+ \!2a_i} \right)\!{e^{b_i{w_i}}}\! + \!\left( {b_iD_i{w_i}\! -\! D_i \!-\! 2b_i} \right)\!{e^{a_i{w_i}}}}\!}\right)\nonumber\\
&\mathop= \limits^{\left( {\text{a}} \right)}\!\frac{\mu}{{D_i^3}}\! \left({\!\left( {\!a_iD_i{w_i}\! -\! \left( {\!\lambda \! + \!\mu \!} \right)\!} \right)\!{e^{b_i{w_i}}}\! + \! \left( {\!b_iD_i{w_i}\!+ \!\left( {\!\lambda \! + \!\mu \!} \right)\!} \right)\!{e^{a_i{w_i}}}\!}\right)\!\!,\label{62}\\
&\frac{{\partial P_i^{\text A}}}{{\partial {\lambda _j}}} = 0,\label{63}\\
&\frac{{{\partial ^2}P_i^{\text A}(\lambda_i;w_i)}}{{\partial {\lambda _i}^2}}\nonumber\\
&= \frac{\mu }{{{D_i^6}}}( {{D_i^3}( {( {a_i'D_i{w_i} + a_iD_i'{w_i} + b_i'{w_i}( {a_iD_i{w_i} - ( {\lambda }}}}} \nonumber\\
&{{{{{+ \mu } )} )} ){e^{b{w_i}}} + ( {b_i'D_i{w_i} + b_iD_i'{w_i} + a_i'{w_i}( {b_iD_i{w_i} + ( {\lambda  } }  } } } \nonumber\\
&{{{{{+ \mu } )} )} ){e^{a_i{w_i}}}} ) - 3D_i'{D_i^2}( {( {a_iD_i{w_i} - ( {\lambda  + \mu } )} ){e^{b_i{w_i}}} }}\nonumber\\
&{{+ ( {b_iD_i{w_i} + ( {\lambda  + \mu } )} ){e^{a_i{w_i}}}} )} )\nonumber\\
&= \frac{{{\mu ^2}}}{{{D_i^5}}}({( {a_i{{( {D_i{w_i}} )}^2} + ( {4a_i + 2b_i} )D{w_i} - 6( {\lambda  + \mu } )} ){e^{b_i{w_i}}}} \nonumber\\
&+{( { - b_i{{( {D_i{w_i}} )}^2} + ( {4b_i + 2a_i} )D_i{w_i} + 6( {\lambda  + \mu } )} ){e^{a_i{w_i}}}\!})\nonumber\\
&\mathop  =\limits^{( {\text{b}} )}  \frac{{{\mu ^2}}}{{{D_i^5}}}{e^{b_i{w_i}}}( {( { - b_i{{( {D_i{w_i}} )}^2} + ( {4b_i + 2a_i} )D_i{w_i} - 6( a_i+b_i )} )}\nonumber\\
&\times{{e^{D_i{w_i}}} \!+\! a_i{{( {D_i{w_i}} )}^2} \!+ \!( {4a_i \!+ \!2b_i} )D_i{w_i} \!+ 6( a_i+b_i )} ),\label{p70}\\
&\frac{{{\partial ^2}P_i^{\text{A}}}(\lambda_i;w_i)}{{\partial {\lambda _j}^2}}=\frac{{{\partial ^2}P_i^{\text{A}}(\lambda_i;w_i)}}{{\partial {\lambda _i}{\lambda _j}}} = \frac{{{\partial ^2}P_i^{\text{A}}(\lambda_i;w_i)}}{{\partial {\lambda _j}{\lambda _i}}}=0,  \label{p71}
\end{align}
for all $j\neq i$, where equalities (a) and (b) hold following from (\ref{p37}).

To prove that $P_i^{\text{A}}(\lambda_i;w_i)$ is a convex function w.r.t. $\boldsymbol {\lambda}$, we need to prove the corresponding Hessian Matrix is semi-positive definite. Based on (\ref{p70}) and (\ref{p71}), it can be found that we only need to prove that $\frac{{{\partial ^2}P_i^{\text{A}}(\lambda_i;w_i)}}{{\partial {\lambda _i}^2}}$
 is positive for $w_i>0$. To this end, let us define
\begin{align}
f_i\left( x \right): = &\left( { - b_i{x^2} + \left( {4b_i + 2a_i} \right)x - 6\left( {a_i + b_i} \right)} \right){e^x} \nonumber\\
&+ a_i{x^2} + \left( {4a_i + 2b_i} \right)x + 6\left( {a_i + b_i} \right). \nonumber
\end{align}
One can obtain the first, the second and the third derivatives
\begin{align}
f_i' \!\left(  x  \right) \! &= \! \left( {  - b_i{x^2} \! + \! 2\left( {\! a_i \! +  b_i \!} \right) \!x \! - \! 2b_i \! - \! 4a_i  } \right) \!{e^x}  \!\! + \! 2a_ix  \!+ \! 4a_i \! + \! 2b_i,\nonumber\\
f_i'' \!\left(  x  \right) \!&=  \!\left( { - b_i{x^2} + 2a_ix - 2a_i} \right){e^x} + 2a_i,\nonumber\\
f_i''' \!\left(  x  \right) \!&=  \!\left( { - b_i{x^2} + 2\left( {a_i - b_i} \right)x} \right){e^x}\mathop  \ge  0,\nonumber
\end{align}
for $x\geq0$, where the inequality holds  following from $b_i <0$ and $a_i-b_i>0$.
Since $f_i( 0 ) = 0$, $f_i'( 0 ) = 0$, $f_i''( 0 ) = 0$, one can get $f_i''( x )>0$, $f_i'( x )>0$ and $f_i( x )>0$, when $x>0$.
Accordingly, based on (\ref{p70}), one can get
\begin{align}
\frac{{{\partial ^2}P_i^{\text {A}}}(\lambda_i;w_i)}{{\partial {\lambda _i}^2}}
 = \frac{{{\mu ^2}}}{{{D_i^5}}}{e^{b_i{w_i}}}f_i\left( {D_i{w_i}} \right)>0.
\end{align}
Hence, $P_i^{\text{A}}(\lambda_i;w_i)$ is a convex function w.r.t. $\boldsymbol {\lambda}$.

Because of the convexity-preserving property of maximum operator, one can get $P_{\text{M}}^{\text{A}}(\boldsymbol {\lambda};\boldsymbol {w})$ is also a convex function w.r.t. $\boldsymbol {\lambda}$. Furthermore, noting the convexity of the feasible set, it can be concluded that $\mathcal{P}_2$ is a convex problem.

Finally, let us prove that $P_i^{\text{A}}(\lambda_i;w_i)$ monotonically decreases with $\lambda_i$, regardless  of $\lambda_j$, for all $j\neq i$.
To prove the monotonicity, it can be found that we only need to prove $\frac{{\partial P_i^{\rm{A}}(\lambda_i;w_i)}}{{\partial {\lambda _i}}}< 0$. Similarly, let us define $r_i( x ) := ( {{b_i}x + ( {\lambda  + \mu } )} ){e^x} + {a_i}x - ( {\lambda  + \mu } )$. According to (\ref{p37}), $r_i(x)$'s first and second derivatives, i.e. $r_i'(x)$ and $r_i''(x)$, can be obtained.
One can get $r_i'(0)=0$, $r_i(0)=0$ and $r_i''(x)<0$ for $x\geq0$, which imply that $r_i'(x)<0$ and $r_i(x)<0$ for $x>0$. Thus, it can be concluded that $\frac{\partial P_i^{\rm{A}}(\lambda_i;w_i)}{{\partial {\lambda _i}}} = \frac{\mu }{{D_i^3}}{e^{{b_i}{w_i}}}r_i( {{D_i}{w_i}} ) < 0$.
Noting that $\frac{{\partial P_i^{\rm{A}}(\lambda_i;w_i)}}{{\partial {\lambda _j}}} =0$, one can directly obtain that $P_i^{\rm{A}}(\lambda_i;w_i)$ is uncorrelated with  $\lambda_j$, for all $j\neq i$.
This ends the proof.

\baselineskip=18pt
\bibliographystyle{IEEEtran}
\bibliography{vpp}

\vspace{-11 mm}
\begin{IEEEbiography}[{\includegraphics[width=1in,height=1.25in,clip,keepaspectratio]{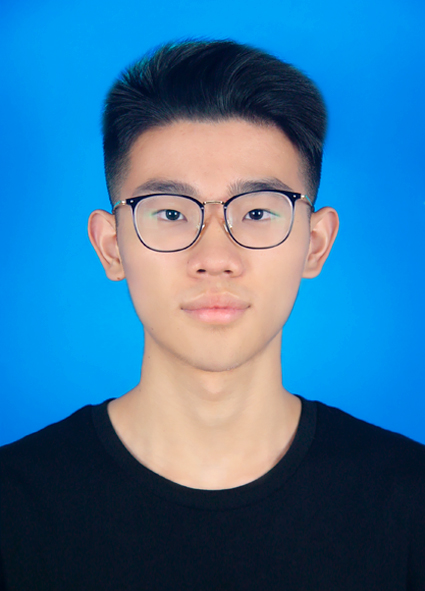}}]{Tianci Zhang}
received the B.S. degree from Chongqing University, China, in 2022, where he is currently pursuing the M.S. degree in information and communication engineering.
His research interests include age of information, wireless communications and wireless networking.
\end{IEEEbiography}
\vspace{-11 mm}
\begin{IEEEbiography}[{\includegraphics[width=1in,height=1.25in,clip,keepaspectratio]{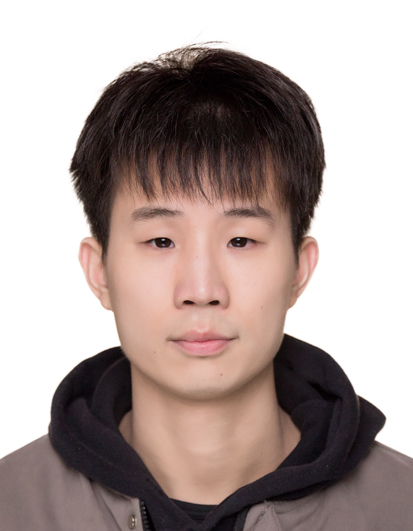}}]{Shutong Chen}
received the B.S. degree from Chongqing University, Chongqing, China, in 2022. He is currently pursuing the M.S. degree in University College London, London WC1E 6BT, U.K. His research interests include performance evaluations and age of information.
\end{IEEEbiography}
\vspace{-11 mm}
\begin{IEEEbiography}[{\includegraphics[width=1in,height=1.25in,clip,keepaspectratio]{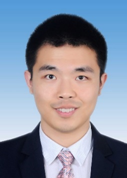}}]{Zhengchuan Chen}
received the Ph.D. degree from Tsinghua University, China, in 2015. He is currently an Assistant Professor with the School of Microelectronics and Communication Engineering, Chongqing University, China. His research interests include reconfigurable intelligent surface, age of information, and network information theory. He has served as a Technical Program Committee Member for several IEEE conferences. He received the Best Paper Award at the International Workshop on High Mobility Wireless Communications in 2013.
\end{IEEEbiography}
\vspace{-11 mm}
\begin{IEEEbiography}[{\includegraphics[width=1in,height=1.25in,clip,keepaspectratio]{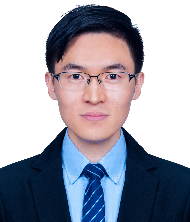}}]{Zhong Tian}
is an assistant professor in the School of Microelectronics and Communication Engineering, Chongqing University, China. He has earned his Ph.D. degrees at Tsinghua University, China, in 2019. His research interests include cognitive radios networks, non-orthogonal multiple access, the blockchain-based spectrum sharing and the reconfigurable intelligent surface.
\end{IEEEbiography}
\vspace{-11 mm}
\begin{IEEEbiography}[{\includegraphics[width=1in,height=1.25in,clip,keepaspectratio]{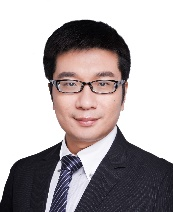}}]{Yunjian Jia}
received the Ph.D. degree from Osaka University, Japan, in 2006. From 2006 to 2012, he was a researcher with Central Research Laboratory, Hitachi, Ltd., where he engaged in research and development on wireless networks, and contributed to LTE/LTE-Advanced standardization in 3GPP. He is now a professor at the School of Microelectronics and Communication Engineering, Chongqing University, China. He is the author of more than 70 published papers and the inventor of 32 patents.
\end{IEEEbiography}
\vspace{-11 mm}
\begin{IEEEbiography}[{\includegraphics[width=1in,height=1.25in,clip,keepaspectratio]{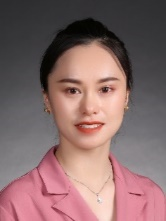}}]{Min Wang}
received the Ph.D. degree from the Department of Electronic Engineering, Tsinghua University, Beijing, China, in 2019. Since 2019, she has been with the School of Optoelectronics Engineering, Chongqing University of Posts and Telecommunications, Chongqing. Dr. Wang's current research interests include transmitarray and reflectarray antennas, reconfigurable antennas, millimeter-wave antennas, arrays and metasurfaces analysis.
\end{IEEEbiography}
\vspace{-11 mm}
\begin{IEEEbiography}[{\includegraphics[width=1in,height=1.25in,clip,keepaspectratio]{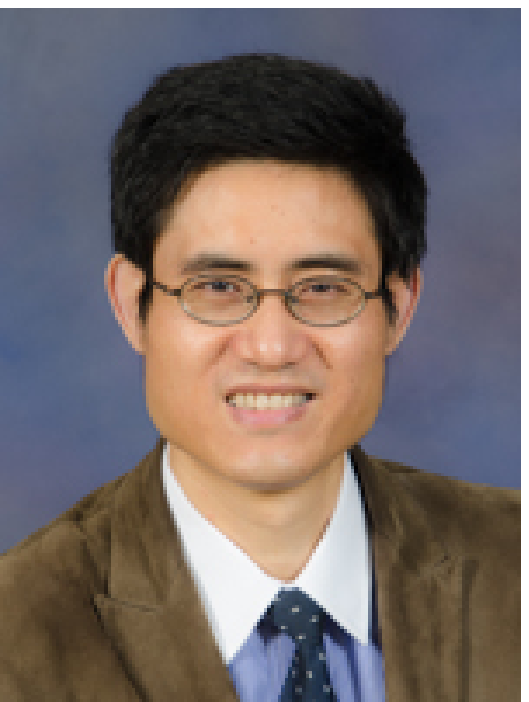}}]{Dapeng Oliver Wu}
(Fellow, IEEE) received the
Ph.D. degree in electrical and computer engineering
from Carnegie Mellon University, Pittsburgh, PA,
USA, in 2003. He is currently a Professor with the Department of Computer Science, City University of HongKong, HongKong. His research
interests include networking, communications, signal
processing, computer vision, machine learning, smart
grid, and information and network security.

%He was
%the recipient of the University of Florida Term Professorship Award in 2017, University of Florida Research Foundation Professorship
%Award in 2009, AFOSR Young Investigator Program (YIP) Award in 2009, ONR
%Young Investigator Program (YIP) Award in 2008, NSF CAREER award in 2007,
%the IEEE Circuits and Systems for Video Technology (CSVT) Transactions Best
%Paper Award for Year 2001, and the best paper awards in IEEE GLOBECOM
%2011 and International Conference on Quality of Service in Heterogeneous
%Wired/Wireless Networks (QShine) 2006.
%He was the Editor in Chief of IEEE TRANSACTIONS ON NETWORK SCIENCE
%AND ENGINEERING, Editor-at-Large for IEEE OPEN JOURNAL OF THE COMMUNICATIONS SOCIETY, founding Editor-in-Chief of Journal of Advances in
%Multimedia and an Associate Editor for IEEE TRANSACTIONS ON CLOUD COMPUTING, IEEE TRANSACTIONS ON COMMUNICATIONS, IEEE TRANSACTIONS
%ON SIGNAL AND INFORMATION PROCESSING OVER NETWORKS, IEEE Signal
%Processing Magazine, IEEE TRANSACTIONS ON CIRCUITS AND SYSTEMS FOR
%VIDEO TECHNOLOGY, IEEE TRANSACTIONS ON WIRELESS COMMUNICATIONS
%and IEEE TRANSACTIONS ON VEHICULAR TECHNOLOGY.

%(Fellow, IEEE) received the Ph.D. degree in electrical and computer engineering from
%Carnegie Mellon University, Pittsburgh, PA, USA, in
%2003. He is currently a Professor with the Department of Computer Science, City University of HongKong, HongKong. His research interests include networking, communications, signal
%processing, computer vision, machine learning, smart
%grid, and information and network security.
\end{IEEEbiography}

\ifCLASSOPTIONcaptionsoff
  \newpage
\fi
\end{document}